\documentclass[journal,12pt,onecolumn]{IEEEtran}

\usepackage{amsmath,amsfonts}
\usepackage{amssymb,verbatim,amsfonts,amsmath,amsthm}
\usepackage{geometry}
\usepackage{cite}
\usepackage{hyperref}
\usepackage{bm}

\usepackage{longtable}
\usepackage{booktabs}
\usepackage[justification=centering]{caption}
\usepackage{graphicx}
\usepackage[figuresright]{rotating}

\usepackage{seqsplit}

\newtheorem{theorem}{Theorem}
\newtheorem{lemma}[theorem]{Lemma}
\newtheorem{corollary}[theorem]{Corollary}
\newtheorem{proposition}[theorem]{Proposition}

\newtheorem{remark}[theorem]{Remark}

\newcommand{\Fq}{\mathbb{F}_q}

\begin{document}

\author{Haojie Chen
\thanks{Haojie Chen is with the School of Mathematics, Sun Yat-sen University, Guangzhou 510275,
China (e-mail: chenhj69@mail2.sysu.edu.cn).}
}

\title{A Uniform Upper Bound on the Length of MDS Codes from Genus-Two Curves}

\maketitle

\begin{abstract}
We study the maximal length of MDS algebraic geometry codes
from genus-two curves over finite fields. Let $C_\mathcal{ L}(D,G)$ be
such a code over $\mathbb F_q$, with length $n$ and dimension $k$.
Assume that $3\leq k\leq n-3$.
We prove that $n<(2q+10\sqrt q+16)/3$.
This bound holds for every prime power $q$ and for both
parities of $k$. It gives a uniform upper bound of
$2q/3+O(\sqrt q)$ for the maximal length.
The explicit bound is strictly smaller than $q+1$
when $q\geq125$.

{\bf Index Terms:}  MDS codes, Hyperelliptic curves, Algebraic geometry codes, Function fields.
\end{abstract}

\section{Introduction}
\label{sec:introduction}

Maximum distance separable (MDS) codes are optimal linear codes
with respect to the Singleton bound. A linear $[n,k,d]_q$ code over $\Fq$
has length $n$, dimension $k$, and minimum distance $d$, and is
called MDS if $d=n-k+1$. Their optimal distance properties make
MDS codes fundamental objects in coding theory; see
\cite{MR465509}. In addition to the classical questions about
their possible parameters, MDS codes with prescribed algebraic
properties have also been studied. A basic problem is to determine
how long an MDS code can be over a given finite field.

Let $M(k,q)$ denote the maximal length of a non-trivial $q$-ary
linear MDS code of dimension $k$. The
classical MDS main conjecture (see \cite{MR465509}) states that for $q>k$, 
\[
    M(k,q)=
    \begin{cases}
        q+2, & q\text{ even and }k\in\{3,q-1\},\\
        q+1, & \text{otherwise}.
    \end{cases}
\]

Much progress has been made toward this conjecture.
The connection between MDS codes and arcs in finite projective
spaces has led to substantial progress on this problem.
Ball~\cite{MR2911882} proved the conjecture over prime fields.
For odd square fields, results on planar arcs by Ball and
Lavrauw~\cite{MR3846204} yield further cases of the conjecture;
see also their survey~\cite{MR4073887}.
Chao and Kaneta~\cite{MR1477225} treated the cases
$11\leq q\leq19$.
Other important results on arcs and the associated length bounds
include the work of Hirschfeld and Korchm\'aros~\cite{MR1398077},
Kaneta and Maruta~\cite{MR985681}, and Storme and
Thas~\cite{MR1198385}.
For codes obtained from algebraic curves, restrictions on the
underlying curve provide additional arithmetic information that
can be used to study the length problem.

Denote by $\mathbb{F}_q$ the finite field with $q$ elements.
Let $F/\Fq$ be a function field of genus $g$ with full
constant field $\Fq$, and let $P_1,\ldots,P_n$ be distinct
rational places of $F$. Put $D=P_1+\cdots+P_n$, and let $G$ be a
divisor satisfying
\[
    \operatorname{supp}(G)\cap\operatorname{supp}(D)=\varnothing,
    \qquad 2g-2<\deg G<n.
\]
The algebraic geometry code $C_\mathcal{ L}(D,G)$ is obtained by evaluating
the functions in the Riemann--Roch space $\mathcal{ L}(G)$ at these
places. It has dimension $k=\deg G-g+1$ and minimum distance
$d\geq n-\deg G$ (see \cite{MR2464941}).

The MDS main conjecture for algebraic geometry codes has also
been investigated. Munuera~\cite{MR1178199} introduced an approach that
established cases of the conjecture for curves of genera one
and two over sufficiently large fields.
Further developments were obtained by Chen~\cite{MR1289576}
and Walker~\cite{MR1412183}.
Chen and Yau~\cite{MR1454971} studied geometric hyperelliptic
MDS codes and proved the conjecture under explicit hypotheses
on the genus and the field size.
These results motivate a more specific question: how far below
the general $q+1$ benchmark can the maximal length lie when the
underlying curves belong to a fixed class?

When $g=1$, the resulting codes are called elliptic codes.
Let $\operatorname{MEC}(k,q)$ denote the maximal length of a
non-trivial $q$-ary MDS elliptic code of dimension $k$.
Munuera~\cite{MR1178199} obtained the bound
\[
    \operatorname{MEC}(k,q)
    \leq \frac{q+1}{2}+\sqrt q+k.
\]

Li, Wan, and Zhang~\cite{7282884} proved that, for $\epsilon>0$
and $q>4/\epsilon^2$, there is a constant $C_\epsilon>0$ such that
an elliptic code of length $n\geq(2/3+\epsilon)q$ is not MDS whenever
\[
    C_\epsilon\ln q<k<n-C_\epsilon\ln q.
\]
Both restrictions on the dimension are part of the statement.
They also conjectured that the maximal length of non-trivial MDS
elliptic codes is at most $(1/2+\epsilon)q$ for sufficiently large $q$.

Subsequently, Han and Ren~\cite{MR4564628} established the explicit bound
\[
    \operatorname{MEC}(k,q)\leq\frac{q+1}{2}+\sqrt q
\]
whenever $q\geq289$ and
$3\leq k\leq(q+1-2\sqrt q)/10$.
Their bound is attained when $q$ is an odd square and $k$ is odd
in this range. 

Their subsequent work~\cite[Theorem~1.2]{MR4756098} proved the
asymptotic conjecture of Li, Wan, and Zhang. More precisely, it gives
dimension-dependent bounds that imply the uniform estimate
\[
    n\leq\frac{|E(\mathbb F_q)|+6}{2}
      \leq\frac q2+\sqrt q+\frac72
\]
for non-trivial MDS codes from an elliptic curve $E/\mathbb F_q$.
Two recent preprints give further results for elliptic codes.
Chen, Hu, Huang, and Zhao~\cite{ChenHuHuangZhao2026}
determine maximal lengths for $q\geq289$ and
$3\leq k\leq(q+1-2\sqrt q)/10$.
Their results distinguish divisors supported on rational places from
divisors that may include places of higher degree.
Chen and Huang~\cite[Theorem~32]{ChenHuang2026} prove
$n\leq |E(\mathbb F_q)|/2$ for $q\geq47089$ and $3\leq k\leq n-3$.
These results concern genus one.

For genus two, Munuera~\cite{MR1178199} and Chen and
Yau~\cite{MR1454971} obtained the $q+1$ bound under field-size
hypotheses. Their method uses a divisor class whose representatives
avoid prescribed evaluation points.
Recent work by Zhu et al.~\cite{ZhuChenPengZhangZhao2026}
constructs MDS codes using split Jacobians of genus-two curves.
Here we seek a uniform upper bound for all genus-two curves in the
dimension range below.

We ask whether genus-two MDS codes satisfy a bound
$n\leq cq+O(\sqrt q)$ for an explicit absolute constant $c<1$.
The bound should hold for every genus-two function field and every
dimension $3\leq k\leq n-3$, in every characteristic.




In this paper, we give an affirmative answer with $c=2/3$.
Let $F/\Fq$ be a function field of genus $2$, and retain
the notation $C_\mathcal{ L}(D,G)$ introduced above. Our main result shows
that if $C_\mathcal{ L}(D,G)$ is MDS with
\[
    \deg G=k+1<n,\qquad 3\leq k\leq n-3,
\]
then
\begin{equation}
    n<\frac{2q+10\sqrt q+16}{3}.
    \label{eq:introduction-main-bound}
\end{equation}
This bound holds for every prime power $q$ and both parities
of $k$. The divisor $G$ may have places of degree greater than
one in its support.

Therefore, the maximal length of these codes is at most
$2q/3+O(\sqrt q)$, uniformly over the underlying genus-two
function field and the entire indicated dimension range.
Thus the leading coefficient in the length bound is reduced
from $1$ to $2/3$. For every prime power $q\geq125$, the bound
also gives a strict improvement over the $q+1$ benchmark.

The paper is organized as follows. We begin in Section~\ref{sec:Preliminary}
with the necessary preliminaries on function fields and
algebraic geometry codes. The main length bound is proved
in Section~\ref{sec:main-length-bound}. 
Section~\ref{sec:uniform-value-set} proves the key auxiliary
theorem used in the proof of the main length bound.

\section{Preliminaries}
\label{sec:Preliminary}
In this section, we introduce some standard results about algebraic geometry codes and the MDS property.

    \subsection{Algebraic Geometry Codes}

Let $\mathbb{F}_q$ denote the finite field with $q$ elements and let $F/\Fq$ be an algebraic function
field of genus $g$ with full constant field $\Fq$.
We use the terminology of \cite{MR2464941}.
Let $\mathbb P_F$ be the set of places of $F$, and put
\[
    \mathbb P_F^{(1)}
    =\{P\in\mathbb P_F:\deg P=1\},
    \qquad N_F=|\mathbb P_F^{(1)}|.
\]
A divisor of $F$ is a formal sum
\[
    A=\sum_{P\in\mathbb P_F} n_P P,
    \qquad n_P\in\mathbb Z,
\]
where all but finitely many coefficients $n_P$ are zero.

Here $\deg P$ is the degree of the residue field of $P$ over $\mathbb F_q$.
We write $\deg A=\sum_P n_P\deg P$ and
$\operatorname{supp}(A)=\{P:n_P\neq0\}$.
The notation $A\geq0$ means that every coefficient is nonnegative;
$A\leq B$ means $B-A\geq0$.
For a divisor $A$, write
\[
    \mathcal L(A)
    =\{0\}\cup\{f\in F^\times:(f)+A\geq0\},
    \qquad
    \ell(A)=\dim_{\Fq}\mathcal L(A).
\]
For a nonzero function $h\in F^\times$, its principal divisor
is defined by
\[
    (h)=\sum_{P\in\mathbb P_F}v_P(h)P.
\]
Two divisors $A$ and $A'$ are said to be linearly equivalent
if  $A'=A+(h)$
for some $h\in F^\times$.
Here $v_P$ denotes the normalized valuation at $P$. We write $(h)_0$ and $(h)_\infty$ for the effective zero and
pole divisors, so $(h)=(h)_0-(h)_\infty$.
We write $A\sim A'$ for this relation and $[A]$ for the
divisor class of $A$, consisting of all divisors linearly
equivalent to $A$. In particular, if $A'=A+(h)$ with
$h\in F^\times$, then
\begin{equation}
    \mathcal L(A')=h^{-1}\mathcal L(A),
    \qquad \ell(A')=\ell(A).
    \label{eq:linear-equivalence-spaces}
\end{equation}

Let $W$ be a canonical divisor. The Riemann--Roch theorem
\cite[Theorem~1.5.15]{MR2464941} gives
\begin{equation}
    \ell(A)=\deg A+1-g+\ell(W-A),
    \qquad \deg W=2g-2.
    \label{eq:genus-two-riemann-roch}
\end{equation}
We will also use the Hasse--Weil bound
\cite[Theorem~5.2.3]{MR2464941} in the form
\begin{equation}
  q+1-2g\sqrt q \leq N_F\leq q+1+2g\sqrt q.
    \label{eq:genus-two-hasse-weil}
\end{equation}

In what follows, we always assume that the genus $g$ of $F$ is $2$.
Let $P_1,\ldots,P_n\in\mathbb P_F^{(1)}$ be distinct rational places, and set
\[
    S=\{P_1,\ldots,P_n\},
    \qquad D=P_1+\cdots+P_n.
\]
Let $G$ be a divisor with
\begin{equation}
    \operatorname{supp}(G)\cap S=\varnothing,
    \qquad \deg G=k+1<n,
    \qquad 3\leq k\leq n-3.
    \label{eq:code-assumptions}
\end{equation}
The algebraic geometry code associated with $D$ and $G$ is
\[
    C_\mathcal{ L}(D,G)
    =\{(f(P_1),\ldots,f(P_n)):f\in\mathcal L(G)\}
    \subseteq\Fq^n.
\]
Since $\deg(G-D)<0$, the kernel of the evaluation map is zero.
Moreover, \eqref{eq:genus-two-riemann-roch} gives $\ell(G)=k$.
Thus $C_\mathcal{ L}(D,G)$ has dimension $k$, in agreement with the
standard parameter formula for algebraic geometry codes
\cite[Theorem~2.2.2 and Corollary~2.2.3]{MR2464941}.

\subsection{A Characterization of the MDS Property and Dual Codes}

The following proposition expresses the usual information-set
characterization of an MDS code in terms of Riemann--Roch spaces
(see \cite[Lemma~2.2]{MR1454971}).

\begin{proposition}
\label{prop:mds-vanishing}
Under Condition~\eqref{eq:code-assumptions}, the code $C_\mathcal{ L}(D,G)$ is MDS
if and only if
\begin{equation}
    \ell(G-B)=0
    \quad\text{for every }0\leq B\leq D
    \text{ with }\deg B=k.
    \label{eq:mds-k-zeros}
\end{equation}
If the code is MDS, then, for every $0\leq j\leq k$ and every
choice of distinct rational places $Q_1,\ldots,Q_j\in S$,
\begin{equation}
    \ell(G-Q_1-\cdots-Q_j)=k-j.
    \label{eq:mds-vanishing}
\end{equation}
\end{proposition}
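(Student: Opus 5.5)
We argue directly from the definition of the MDS property via minimum-weight codewords, and then derive the dimension count from the MDS property together with Riemann--Roch \eqref{eq:genus-two-riemann-roch}. Throughout, $\deg G=k+1$, $g=2$ and $\ell(G)=k$, and the evaluation map $\mathcal L(G)\to C_\mathcal{L}(D,G)$ is injective since $\deg(G-D)<0$.

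For the equivalence, note that $C_\mathcal{L}(D,G)$ is MDS if and only if $d=n-k+1$. By the Singleton bound this is the same as saying that no nonzero codeword has $k$ or more zero coordinates. A nonzero $f\in\mathcal L(G)$ vanishes at distinct places $Q_1,\dots,Q_m\in S$ exactly when $f\in\mathcal L(G-Q_1-\cdots-Q_m)$. This uses $\operatorname{supp}(G)\cap S=\varnothing$: there $v_{Q_i}(f)\geq 0$, and vanishing means $v_{Q_i}(f)\geq1$. Hence a nonzero codeword with at least $k$ zeros exists if and only if some $0\leq B\leq D$ with $B$ reduced and $\deg B=k$ has $\ell(G-B)>0$. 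Any $B$ with $0\leq B\leq D$ is automatically reduced, since $D$ is reduced, so \eqref{eq:mds-k-zeros} is equivalent to the MDS property. Nonzero codewords with more than $k$ zeros are already covered, since a superset of zeros contains a $k$-subset.

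For \eqref{eq:mds-vanishing}, fix $j\leq k$ and distinct $Q_1,\dots,Q_j\in S$, and put $A_j=G-Q_1-\cdots-Q_j$. We first get a lower bound. Each condition $f(Q_i)=0$ is one linear condition on $\mathcal L(G)$; more intrinsically, $\ell(A)-\ell(A-P)\leq 1$ for a rational place $P$. So $\ell(A_j)\geq k-j$. For the upper bound, extend to $k$ distinct places by choosing $Q_{j+1},\dots,Q_k\in S$, which is possible since $n>k$. The map $\mathcal L(A_j)\to\Fq^{k-j}$, $f\mapsto(f(Q_{j+1}),\dots,f(Q_k))$, has kernel $\mathcal L(G-Q_1-\cdots-Q_k)$, which is $0$ by \eqref{eq:mds-k-zeros}. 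Thus $\ell(A_j)\leq k-j$, and equality follows.

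Riemann--Roch is not really needed here beyond the already established fact $\ell(G)=k$, though one could remark consistency: $\ell(A_j)=\deg A_j-1+\ell(W-A_j)$ forces $\ell(W-A_j)=0$ for $j\leq k$. No step is a serious obstacle. The only point needing care is the translation of zeros of $f$ at places of $S$ into membership in $\mathcal L(G-B)$, which relies on the disjointness of supports; this is where I would be careful.
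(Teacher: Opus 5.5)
Your proof is correct and essentially the paper's argument. The first part is the same translation between codewords with at least $k$ zeros in $S$ and nonzero elements of $\mathcal L(G-B)$, where $0\le B\le D$ and $\deg B=k$. For the second part, you sandwich $\ell(G-Q_1-\cdots-Q_j)$ between the inequality $\ell(A)-\ell(A-P)\le 1$ and injectivity of evaluation at $k-j$ complementary places. This is a two-inequality rephrasing of the paper's step, which projects the isomorphism $\mathcal L(G)\to\Fq^k$ onto $j$ coordinates and applies rank--nullity; both rest on the same fact that evaluation at any $k$ places of $S$ is injective.
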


\begin{proof}
A nonzero function in $\mathcal L(G-B)$ gives a nonzero
codeword vanishing at the $k$ coordinates in
$\operatorname{supp}(B)$, and hence having weight at most $n-k$.
Conversely, any nonzero codeword of weight at most $n-k$
vanishes at some $k$ coordinates and gives such a function.
Therefore \eqref{eq:mds-k-zeros} is equivalent to
$d(C_\mathcal{ L}(D,G))\geq n-k+1$. The Singleton bound then gives equality.

Now assume that the code is MDS. Evaluation at any $k$
coordinates is an injective map from the $k$-dimensional space
$\mathcal L(G)$ to $\Fq^k$, and is therefore an
isomorphism. Extend $\{Q_1,\ldots,Q_j\}$ to a set of $k$
evaluation places. Projection onto the first $j$ coordinates
shows that evaluation at $Q_1,\ldots,Q_j$ is surjective.
Its kernel is $\mathcal L(G-Q_1-\cdots-Q_j)$, so
\eqref{eq:mds-vanishing} follows from the rank--nullity formula.
\end{proof}



The dual code of an $[n,k,n-k+1]_q$ MDS code is an
$[n,n-k,k+1]_q$ MDS code \cite[Ch.~11]{MR465509}.
For evaluation codes, choose a Weil differential $\eta$ satisfying
\[
    v_{P_i}(\eta)=-1,
    \qquad \operatorname{res}_{P_i}(\eta)=1
    \quad(1\leq i\leq n),
\]
whose existence follows from \cite[Lemma~2.2.9]{MR2464941}.
By the duality formula
\cite[Theorem~2.2.8 and Proposition~2.2.10]{MR2464941},
\begin{equation}
    C_\mathcal{ L}(D,G)^\perp=C_\mathcal{ L}(D,G^\perp),
    \qquad G^\perp=D-G+(\eta).
    \label{eq:ag-duality}
\end{equation}
The coefficient of $G^\perp$ at every $P_i$ is zero, and
\[
    \deg G^\perp=n-k+1=(n-k)+1.
\]
Consequently, the dual code satisfies
\eqref{eq:code-assumptions}, with dimension $n-k$.
Replacing the code by its dual code if necessary, we may assume
without loss of generality that
\begin{equation}
    3\leq k\leq n/2.
    \label{eq:dimension-reduction}
\end{equation}

\subsection{The Hyperelliptic Involution and Degree-Two Divisors}
For a finite separable extension $L'/L$ and places $Q\mid P$, write
$e(Q\mid P)$ for the ramification index and $f(Q\mid P)$ for the
residue field degree. The conorm of $P$ is
\[
    \operatorname{Con}_{L'/L}(P)=\sum_{Q\mid P}e(Q\mid P)Q.
\]
Conorm extends linearly to divisors.

A function field of genus at least $2$ is called
hyperelliptic if it is a quadratic extension of
a rational function field over the same constant field.
Every genus-two function field is hyperelliptic
\cite[Lemma~6.2.2]{MR2464941}. 
Fix $x\in F$ such that $[F:\Fq(x)]=2$, and put $H=(x)_\infty$.
The extension $F/\Fq(x)$ is Galois; see
\cite[Sec.~6.2]{MR2464941}. Write
\[
    \operatorname{Gal}(F/\Fq(x))=\{1,\sigma\}.
\]
The automorphism $\sigma$ is called the hyperelliptic involution.
Two distinct rational places $P$ and $Q$ form a
\emph{conjugate pair} if $Q=\sigma(P)$.
We collect some basic properties of the divisor $H$.
These properties will allow us to replace conjugate pairs
of rational places without changing the divisor class.
\begin{lemma}
\label{lem:hyperelliptic-divisors}
The divisor $H$ has degree $2$, is linearly equivalent to $W$,
and satisfies
\[
    \mathcal L(H)=\langle1,x\rangle_{\Fq}.
\]
Moreover, the following statements hold.
\begin{enumerate}
    \item For every $P\in\mathbb P_F^{(1)}$,
    \begin{equation}
        P+\sigma(P)\sim H.
        \label{eq:conjugate-pair}
    \end{equation}
    If $\sigma(P)=P$, the left-hand side means $2P$.

    \item For every divisor $B$ of degree $2$,
    \begin{equation}
        \ell(B)=
        \begin{cases}
            2,&B\sim H,\\
            1,&B\not\sim H.
        \end{cases}
        \label{eq:degree-two-dimension}
    \end{equation}
    In particular, distinct effective divisors of degree $2$
    can be linearly equivalent only if both are equivalent to
    $H$.

    \item If $z\in F$ satisfies $[F:\Fq(z)]=2$, then
    \begin{equation}
        z=\frac{c+dx}{a+bx},
        \qquad
        a,b,c,d\in\Fq,\quad ad-bc\neq0.
        \label{eq:quadratic-generator}
    \end{equation}
    In particular, $\Fq(z)=\Fq(x)$.
\end{enumerate}
\end{lemma}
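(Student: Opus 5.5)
We use only the Riemann--Roch formula \eqref{eq:genus-two-riemann-roch} with $g=2$ and the standard facts on the quadratic extension $F/\Fq(x)$.

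\emph{Degree and sections of $H$.} Since $[F:\Fq(x)]=2$, the pole divisor $H=(x)_\infty$ has degree $2$. Clearly $1,x\in\mathcal L(H)$. On the other hand, $\ell(H)\leq 2$: a degree-$2$ divisor with $\ell=3$ would give a nonconstant function of pole degree at most $1$, which forces genus $0$. Hence $\mathcal L(H)=\langle1,x\rangle_{\Fq}$ and $\ell(H)=2$.

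\emph{$H\sim W$.} Riemann--Roch gives
\[
    \ell(H)=\deg H+1-2+\ell(W-H)=1+\ell(W-H),
\]
so $\ell(W-H)=1$. Since $\deg(W-H)=0$, a nonzero $f\in\mathcal L(W-H)$ satisfies $(f)+W-H\geq0$ with degree $0$. Thus $(f)+W-H=0$, and $H\sim W$.

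\emph{Part 1.} Let $P$ be rational with restriction $p$ to $\Fq(x)$. Then $p$ is rational, and because the extension is Galois with group $\{1,\sigma\}$, we have $\operatorname{Con}(p)=P+\sigma(P)$; when $P$ is ramified this is $2P$. If $p$ is the pole of $x$, then $\operatorname{Con}(p)=H$ directly. Otherwise $p$ is the zero of $x-a$ for some $a\in\Fq$. The principal divisor of $x-a$ in $F$ is the conorm of $p-p_\infty$, so
\[
    (x-a)=P+\sigma(P)-H,
\]
which gives $P+\sigma(P)\sim H$.

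\emph{Part 2.} Let $\deg B=2$. Riemann--Roch gives $\ell(B)=1+\ell(W-B)$, where $\deg(W-B)=0$. A degree-zero divisor has $\ell=1$ if it is principal and $\ell=0$ otherwise. Hence $\ell(B)=2$ exactly when $B\sim W\sim H$, and $\ell(B)=1$ otherwise.

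For the consequence, suppose $B\neq B'$ are effective of degree $2$ with $B'=B+(h)$. Then $h$ is nonconstant, since otherwise $B'=B$. Both $1$ and $h$ lie in $\mathcal L(B)$ and are linearly independent, so $\ell(B)\geq2$. By the dichotomy, $B\sim H$, and then $B'\sim H$ as well.

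\emph{Part 3.} This is the step needing most care. Suppose $[F:\Fq(z)]=2$. Then $(z)_\infty$ has degree $2$ and $\ell((z)_\infty)\geq2$, so $(z)_\infty\sim H$ by part 2. Write $(z)_\infty=H+(h)$. By \eqref{eq:linear-equivalence-spaces},
\[
    \mathcal L((z)_\infty)=h^{-1}\langle1,x\rangle,
\]
and this space contains both $1$ and $z$. Therefore
\[
    1=h^{-1}(a+bx),\qquad z=h^{-1}(c+dx)
\]
for some $a,b,c,d\in\Fq$. Dividing, $z=(c+dx)/(a+bx)$. If $ad-bc=0$, then $z$ would be constant, which is impossible. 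The fractional linear relation then gives $\Fq(z)=\Fq(x)$.
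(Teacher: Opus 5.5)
Your proof is correct and follows the paper's argument essentially step by step: Riemann--Roch gives $H\sim W$ and the degree-two dichotomy, part 1 uses the conorm of a rational place of $\Fq(x)$, and part 3 writes $(z)_\infty=H+(h)$ so that $h$ and $hz$ lie in $\langle 1,x\rangle_{\Fq}$. The only soft spot is your separate upper bound $\ell(H)\le 2$: the ``pole degree at most $1$'' argument is not immediate when $H$ is a single place of degree $2$, but you do not need it, since $\ell(H)\ge 2$ already forces $\ell(W-H)\ge 1$, hence $W-H$ principal and $\ell(H)=1+\ell(W-H)=2$, exactly as in the paper.
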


\begin{proof}
For completeness, we provide a proof.
The degree formula for a nonconstant function
\cite[Theorem~1.4.11]{MR2464941} gives
$\deg H=[F:\Fq(x)]=2$.
Since $1,x\in\mathcal L(H)$ are linearly independent,
\eqref{eq:genus-two-riemann-roch} implies
\[
    2\leq\ell(H)=1+\ell(W-H).
\]
A divisor of degree zero has a nonzero Riemann--Roch space
only if it is principal \cite[Corollary~1.4.12]{MR2464941}. Thus $H\sim W$, and $\ell(H)=2$.

For (1), let $R$ be the restriction of $P$ to
$\Fq(x)$. Since $P$ is rational, so is $R$.
Its conorm in $F$ is $P+\sigma(P)$, including multiplicity
when $P$ is ramified. The divisor $H$ is the conorm of the pole
of $x$ in $\Fq(x)$. Any two rational places of a
rational function field are linearly equivalent, so
\eqref{eq:conjugate-pair} follows.

For (2), Riemann--Roch gives
\[
    \ell(B)=1+\ell(W-B).
\]
Since $\deg(W-B)=0$, the last term is $1$ precisely when
$B\sim W\sim H$, and is $0$ otherwise.
Thus,
\begin{equation}
        \ell(B)=
        \begin{cases}
            2,&B\sim H,\\
            1,&B\not\sim H.
        \end{cases}
    \end{equation}
    
If $B_1\neq B_2$ are effective and linearly equivalent, a
nonconstant function $f$ with $B_1+(f)=B_2$, together with $1$,
belongs to $\mathcal L(B_1)$. Hence $\ell(B_i)\geq2$ and $ B_i \sim H$ for $i=1,2$, proving
the final assertion.

For (3), put $B=(z)_\infty$. Then $\deg B=2$, and
$1,z\in\mathcal L(B)$ are linearly independent. By (2),
$B\sim H$. Choose $h\in F^\times$ with $B=H+(h)$.
Equation~\eqref{eq:linear-equivalence-spaces} shows that
$h,hz$ form a basis of $\mathcal L(H)$. Thus
\[
    h=a+bx,\qquad hz=c+dx
\]
for an invertible coefficient matrix over $\Fq$.
Taking their quotient proves \eqref{eq:quadratic-generator}.
\end{proof}
A rational place $P$ of $F$ is called a Weierstrass point if $\sigma(P)=P$. The following lemma bounds the number of rational
Weierstrass points.
\begin{lemma}[{\cite[Lemma 3.1]{MR1454971}}]
\label{lem:fixed-rational-places}
Let
\[
    w=|\{P\in\mathbb P_F^{(1)}:\sigma(P)=P\}|.
\]
Then $w\leq6$.
\end{lemma}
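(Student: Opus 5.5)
We prove $w\leq6$ by identifying the rational places fixed by $\sigma$ with rational places of $F$ that are ramified over $\Fq(x)$, and then bounding the ramified places through the Hurwitz genus formula. If $\sigma(P)=P$, let $R$ be the restriction of $P$ to $\Fq(x)$. Since $F/\Fq(x)$ is Galois of degree $2$, the places of $F$ above $R$ form a single $\sigma$-orbit. That orbit is $\{P\}$, so $P$ is the only place above $R$. Because $P$ is rational, $f(P\mid R)=1$. The fundamental equality $\sum e f=2$ then forces $e(P\mid R)=2$, so $P$ is ramified. The same argument run backwards shows that any ramified place is fixed by $\sigma$, although only the forward direction is needed. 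Distinct fixed rational places restrict to distinct places of $\Fq(x)$. Hence $w$ is at most the number of places $P$ of $F$ that are ramified over $\Fq(x)$.

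Next apply the Hurwitz genus formula \cite[Theorem~3.4.13]{MR2464941} to $F/\Fq(x)$. The extension is separable: it is Galois, and we use the standard hyperelliptic setup of \cite[Sec.~6.2]{MR2464941}. With $g(F)=2$ and $g(\Fq(x))=0$, the formula reads
\[
  2\cdot 2-2 = 2(0-2) + \deg \operatorname{Diff}(F/\Fq(x)),
\]
so $\deg\operatorname{Diff}=6$. By Dedekind's different theorem, each ramified place $P$ satisfies $d(P\mid R)\geq e(P\mid R)-1=1$. For rational $P$ this contributes at least $\deg P=1$ to $\deg\operatorname{Diff}$. Summing over the fixed rational places gives $w\leq 6$.

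The only delicate point is characteristic $2$, where ramification is wild. There Dedekind's theorem gives $d(P\mid R)\geq e(P\mid R)=2$, so each ramified rational place contributes at least $2$. This yields the stronger bound $w\leq 3$, which still implies $w\leq 6$. In every characteristic the inequality $d\geq 1$ at ramified places is all that is needed, so no case split is required. The main thing to verify carefully is that $\sigma(P)=P$ really forces ramification rather than inertia. This follows because $P$ is rational and therefore $f(P\mid R)=1$.
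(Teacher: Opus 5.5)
Your proposal is correct and follows essentially the same route as the paper: a $\sigma$-fixed rational place is the only place above its restriction and has $f=1$, so it is ramified. The Hurwitz formula then gives $\deg\operatorname{Diff}(F/\Fq(x))=6$, and each such place contributes at least $1$. Your extra observation that $d(P\mid R)\geq 2$ in characteristic $2$, giving $w\leq 3$ there, is a correct refinement but is not needed for the statement.
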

\begin{proof}
A fixed rational place is ramified in $F/\mathbb F_q(x)$:
its decomposition group has order two and its residue degree is one.
The Hurwitz formula~\cite[Theorem~3.4.13]{MR2464941} gives
\[
    \deg\operatorname{Diff}(F/\mathbb F_q(x))
      =2g(F)-2-2(-2)=6.
\]
Each ramified rational place contributes at least one to this degree.
This proves the bound, including in characteristic two.
\end{proof}

\subsection{Degree-Three Divisors and Injective Functions}

For $t\in F\setminus\Fq$, define
the set of values of $t$ at the rational places of $F$ by
\begin{equation}
    V_F(t)=\{t(P):P\in\mathbb P_F^{(1)}\}
    \subseteq\Fq\cup\{\infty\},
    \label{eq:value-set-definition}
\end{equation}
where $t(P)=\infty$ if $P$ is a pole of $t$. We identify $\mathbb F_q\cup\{\infty\}$ with the rational
points $\mathbb P^1(\mathbb F_q)$ of the projective line.
The degree of $t$ means
$\deg(t)_\infty=[F:\Fq(t)]$.

Let $A$ be a divisor of degree $3$.
Then $\ell(A)=2$ by the Riemann--Roch Theorem. 
For any $\Fq$-basis $u,v$ of
$\mathcal L(A)$, the quotient $t=v/u$ is nonconstant by the linear independence of $u,v$.
We first determine the possible degrees of the
associated quotients $t$ and characterize the quadratic case.
\begin{lemma}
\label{lem:degree-three-quotient}
Let $A$ be a divisor of degree $3$.
 For any $\Fq$-basis $u,v$ of
$\mathcal L(A)$, the quotient $t=v/u$ satisfies
\[
    [F:\Fq(t)]\in\{2,3\},
\]
and this extension is separable.
Furthermore,
\begin{equation}
    [F:\Fq(t)]=2
    \quad\Longleftrightarrow\quad
    A\sim H+R_0
    \text{ for some }R_0\in\mathbb P_F^{(1)}.
    \label{eq:quadratic-quotient-criterion}
\end{equation}
Equivalently,
\begin{equation}
    [F:\Fq(t)]=3
    \quad\Longleftrightarrow\quad
    \ell(A-H)=0.
    \label{eq:cubic-quotient-criterion}
\end{equation}
\end{lemma}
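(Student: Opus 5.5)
The idea is to reduce everything to the base point divisor $E=\gcd((u),(v))+A$ of the pencil $\mathcal L(A)$. Write $A_u=(u)+A\geq0$ and $A_v=(v)+A\geq0$. Both are effective of degree $3$. Let $E$ be the largest effective divisor with $E\leq A_u$ and $E\leq A_v$, so that $E=\min(A_u,A_v)$ coefficientwise.

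Since $t=v/u$, we have $(t)=A_v-A_u$. After cancelling $E$, the zero and pole divisors of $t$ are $A_v-E$ and $A_u-E$, which have disjoint supports. Hence $[F:\Fq(t)]=\deg(t)_\infty=3-\deg E$.

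Next I bound $\deg E$. If $\deg E\geq 2$, then $\deg(t)_\infty\leq1$, so $F=\Fq(t)$ would be rational. This contradicts $g=2$. Thus $\deg E\in\{0,1\}$ and $[F:\Fq(t)]\in\{2,3\}$.

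For separability, note that in degree $3$ an inseparable extension would have degree $p=3$ and be purely inseparable. In degree $2$ it would be purely inseparable with $p=2$. In either case $F/\Fq(t)$ is purely inseparable of prime degree $p$. Over a perfect constant field, such an $F$ equals $\Fq(t^{1/p})$, which is rational of genus $0$. So the extension is separable. This is the one step needing care, and I would cite the standard fact (Stichtenoth, Prop.~3.10.2) that a purely inseparable extension of degree $p$ of $K(t)$ over perfect $K$ is $K(t)^{1/p}$, which is rational.

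For the criterion, first assume $[F:\Fq(t)]=2$. Then $\deg E=1$ with $E\geq0$, so $E=R_0$ is a single rational place. Moreover $(t)_\infty=A_u-R_0$ has degree $2$ and $[F:\Fq(t)]=2$. By Lemma~\ref{lem:hyperelliptic-divisors}(2), or via part (3), we get $(t)_\infty\sim H$. Hence $A\sim A_u\sim H+R_0$.

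Conversely, suppose $A\sim H+R_0$. Then $\mathcal L(A)\cong\mathcal L(H+R_0)\supseteq\mathcal L(H)$, and both spaces have dimension $2$. So $\mathcal L(H+R_0)=\langle1,x\rangle$, and every basis quotient is a M\"obius transform of $x$ after multiplying by the isomorphism factor $h^{-1}$ from \eqref{eq:linear-equivalence-spaces}. Quotients are unchanged by this common factor, so $\Fq(t)=\Fq(x)$ and the degree is $2$.

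Finally, for \eqref{eq:cubic-quotient-criterion}: $\deg(A-H)=1$, so Riemann--Roch gives $\ell(A-H)\in\{0,1\}$. Also $\ell(A-H)=1$ holds exactly when $A-H\sim R_0$ for some effective degree-one divisor $R_0$, that is, a rational place. So $\ell(A-H)=1$ is equivalent to $A\sim H+R_0$, which is the degree-$2$ case. By the dichotomy $\{2,3\}$, the equivalence \eqref{eq:cubic-quotient-criterion} follows.
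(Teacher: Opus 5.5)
Your proof is correct and follows essentially the same route as the paper: your base-point divisor $E=\min(A_u,A_v)$ is the exact form of the paper's inequality $(t)_\infty\leq A+(u)$ (in the quadratic case it is precisely the paper's $R_0=A+(u)-(t)_\infty$), and your separability, converse, and degree-one arguments match the paper's. One small addition is needed: the case $\deg E=3$ is excluded because $t\notin\Fq$ by the linear independence of $u,v$, not because $F$ would be rational.
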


\begin{proof}
Both $A+(u)$ and $A+(v)$ are effective, and
\[
    (t)+A+(u)=A+(v)\geq0.
\]
Consequently,
\begin{equation}
    (t)_\infty\leq A+(u),
    \label{eq:pole-divisor-bound}
\end{equation}
and $\deg(t)_\infty\leq3$.
Degree one would make $F=\Fq(t)$ rational, which is
impossible since $g(F)=2$.

To check separability, let $p=\operatorname{char}\Fq$.
If $F/\Fq(t)$ were inseparable, then $t=z^p$ for
some $z\in F$, by \cite[Proposition~3.10.2(d)]{MR2464941}.
It follows that
\[
    \deg(t)_\infty=p\deg(z)_\infty.
\]
As $\deg(t)_\infty$ is either $2$ or $3$, this forces
$\deg(z)_\infty=1$, again contradicting $g(F)=2$.

Suppose that $\deg(t)_\infty=2$.
Lemma~\ref{lem:hyperelliptic-divisors} gives
$(t)_\infty\sim H$.
By \eqref{eq:pole-divisor-bound}, the divisor
\[
    R_0=A+(u)-(t)_\infty
\]
is effective of degree one, and therefore is a rational place.
Hence $A\sim H+R_0$.

Conversely, suppose that $A=H+R_0+(h)$ for some
$h\in F^\times$. Then $hu,hv$ form a basis of
$\mathcal L(H+R_0)$. Since $R_0\geq0$,
\[
    \mathcal L(H)\subseteq\mathcal L(H+R_0).
\]
Both spaces have dimension $2$, so they are equal.
We may therefore write
\[
    hu=a+bx,\qquad hv=c+dx,
    \qquad a,b,c,d\in\Fq,\quad ad-bc\neq0.
\]
Thus
\[
    t=\frac{v}{u}=\frac{c+dx}{a+bx},
    \qquad \Fq(t)=\Fq(x),
\]
which proves \eqref{eq:quadratic-quotient-criterion}.
Finally, $\deg(A-H)=1$, so $\ell(A-H)>0$ holds precisely when
$A-H$ is linearly equivalent to a rational place.
This proves \eqref{eq:cubic-quotient-criterion}.
\end{proof}
We next show that some properties of $t=v/u$ used below are
independent of the choice of basis $\{u,v \}$ and of the representative
of the divisor class $[A]$.
\begin{lemma}
\label{lem:value-set-invariance}
For a divisor $A$ of degree $3$, the generated subfield $\Fq(v/u)$, the degree of $v/u$ and the
cardinality $|V_F(v/u)|$ depend only on $[A]$.
For any fixed subset $S_0\subseteq\mathbb P_F^{(1)}$, injectivity
of $v/u$ on $S_0$ also depends only on $[A]$.
\end{lemma}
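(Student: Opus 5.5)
The plan is to show that every admissible quotient $v/u$ attached to $[A]$ arises from any other by a fractional linear transformation of $\mathbb P^1(\Fq)$. All the listed properties are preserved by such a transformation.

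The first step fixes $A$ and compares two bases. If $u,v$ and $u',v'$ are two $\Fq$-bases of $\mathcal L(A)$, then
\[
    u'=au+bv,\qquad v'=cu+dv,\qquad a,b,c,d\in\Fq,\quad ad-bc\neq0.
\]
Putting $t=v/u$ gives
\[
    t'=v'/u'=\frac{c+dt}{a+bt}.
\]
Hence $\Fq(t')=\Fq(t)$, and so $[F:\Fq(t')]=[F:\Fq(t)]$. The degree of $t$ is $\deg(t)_\infty=[F:\Fq(t)]$, so the degrees also agree.

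The second step changes the representative of the class. If $A'=A+(h)$, then by \eqref{eq:linear-equivalence-spaces} we have $\mathcal L(A')=h^{-1}\mathcal L(A)$. So $h^{-1}u,h^{-1}v$ is a basis of $\mathcal L(A')$, and its quotient is again $v/u$. Combined with the first step, any basis quotient for $A'$ is a Möbius transform of any basis quotient for $A$. This settles the claims about $\Fq(v/u)$ and about the degree.

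The third step treats the value-set statements. Let $\phi(s)=(c+ds)/(a+bs)$, viewed as a bijection of $\Fq\cup\{\infty\}=\mathbb P^1(\Fq)$ with the usual conventions: $\phi(\infty)=d/b$, and $\phi(-a/b)=\infty$. For every rational place $P$ we need
\[
    t'(P)=\phi(t(P)).
\]
This is the only point needing care, namely the behaviour at poles of $t$ and at zeros of $a+bt$. I would check it directly from valuations: write $t'$ as $(c+dt)/(a+bt)$ and, when $v_P(t)<0$, divide numerator and denominator by $t$. The residues then behave as in $\mathbb P^1$.

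Once $t'(P)=\phi(t(P))$ holds for all $P\in\mathbb P_F^{(1)}$, the conclusions follow from $\phi$ being a bijection:
\[
    V_F(t')=\phi(V_F(t)),\qquad\text{so}\qquad |V_F(t')|=|V_F(t)|,
\]
and $t'$ is injective on $S_0$ exactly when $t$ is.

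I expect no real obstacle. The only mildly delicate step is the residue computation at poles and at $\infty$ in the third step.
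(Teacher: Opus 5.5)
Your proposal is correct and follows essentially the same route as the paper's proof. Both compare two bases through the M\"obius relation $t'=(c+dt)/(a+bt)$, pass to another representative $A+(h)$ via the basis $u/h,v/h$, and use that $\phi$ permutes $\mathbb P^1(\Fq)$. Your valuation check of $t'(P)=\phi(t(P))$ at poles of $t$ and at zeros of $a+bt$ fills in a step the paper only asserts.
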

\begin{proof}
Fix a divisor $A$ of degree $3$, and let $u,v$ and
$\widetilde u,\widetilde v$ be two $\Fq$-bases of
$\mathcal L(A)$.
There exist $a,b,c,d\in\Fq$ with $ad-bc\neq0$ such that
\[
    \widetilde u=au+bv,\qquad
    \widetilde v=cu+dv.
\]
Put   $t=\frac vu$ and $\widetilde t=\frac{\widetilde v}{\widetilde u}.$
Then we have $\widetilde t=\frac{c+dt}{a+bt}.$
Consequently,
\[
    \Fq(\widetilde t)=\Fq(t),
\]
so $t$ and $\widetilde t$ have the same degree.

Consider the fractional linear transformation
\[
    \varphi(z)=\frac{c+dz}{a+bz}.
\]
This transformation permutes $\mathbb P^1(\mathbb F_q)$, since it is invertible.
For every rational place $P$ of $F$, we have
\[
    \widetilde t(P)=\varphi(t(P)).
\]
It follows that
\[
    V_F(\widetilde t)=\varphi(V_F(t)),
    \qquad
    |V_F(\widetilde t)|=|V_F(t)|.
\]
Moreover, for any $P,Q\in S_0$,
\[
    \widetilde t(P)=\widetilde t(Q)
    \quad\Longleftrightarrow\quad
    t(P)=t(Q).
\]
Therefore $\widetilde t$ is injective on $S_0$ if and only
if $t$ is injective on $S_0$.

Finally, let $A'$ be another representative of $[A]$.
Write $A'=A+(h)$ for some $h\in F^\times$.
Since
\[
    \mathcal L(A')=h^{-1}\mathcal L(A),
\]
the functions $u/h,v/h$ form a basis of $\mathcal L(A')$,
and their quotient is
\[
    \frac{v/h}{u/h}=\frac vu=t.
\]
Any other basis of $\mathcal L(A')$ gives the same degree,
cardinality $|V_F(v/u)|$, and injectivity property on $S_0$ by the
argument above.
This completes the proof.
\end{proof}



By Lemma~\ref{lem:value-set-invariance}, we may define a
function on divisor classes of degree $3$ by
\[
    V([A])=|V_F(t)|,
    \qquad t=\frac vu,
\]
where $\{u,v\}$ is any $\mathbb F_q$-basis of $\mathcal L(A)$.
This definition is independent of both the representative
$A$ and the chosen basis.

The following lemma provides a criterion for $v/u$ to be
injective on a set $S_0$ of rational places.

\begin{lemma}[Injectivity criterion]
\label{lem:injectivity}
Let $A$ be a divisor of degree $3$, and let
$S_0\subseteq\mathbb P_F^{(1)}$ contain at least two places.
Suppose that
\begin{equation}
    \ell(A-P-Q)=0
    \quad\text{for all distinct }P,Q\in S_0.
    \label{eq:pairwise-vanishing}
\end{equation}
Then, for every basis $u,v$ of $\mathcal L(A)$, the quotient
$t=v/u$ is injective on $S_0$. In particular,
\begin{equation}
    |S_0|\leq V([A]).
    \label{eq:injection-value-set}
\end{equation}
No disjointness assumption on $\operatorname{supp}(A)$ and
$S_0$ is required.
\end{lemma}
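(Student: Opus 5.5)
We argue by contradiction: assume that $t=v/u$ takes the same value at two distinct places $P,Q\in S_0$, and use this to produce a nonzero element of $\mathcal L(A-P-Q)$, contradicting \eqref{eq:pairwise-vanishing}. The inequality \eqref{eq:injection-value-set} then follows at once. The map $P\mapsto t(P)$ sends $S_0$ injectively into $V_F(t)$, and $|V_F(t)|=V([A])$ by Lemma~\ref{lem:value-set-invariance}.

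First we reduce to a convenient representative of $[A]$. Replacing $A$ by a linearly equivalent divisor does not change $t$ (Lemma~\ref{lem:value-set-invariance}). It also preserves \eqref{eq:pairwise-vanishing}, since $\ell(A'-P-Q)=\ell(A-P-Q)$ by \eqref{eq:linear-equivalence-spaces}. The point needing care is that $\operatorname{supp}(A)$ may meet $S_0$. For fixed $P,Q$ we use the approximation theorem to choose $h$ so that $A'=A+(h)$ has coefficient $0$ at $P$ and at $Q$. Then every $f\in\mathcal L(A')$ satisfies $v_P(f)\geq0$ and $v_Q(f)\geq0$, so such $f$ can be evaluated at $P$ and $Q$. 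Take the basis $u'=u/h$, $v'=v/h$ of $\mathcal L(A')$, whose quotient is still $t$.

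Now consider the linear evaluation map $\mathcal L(A')\to\Fq^2$, $f\mapsto(f(P),f(Q))$. Its kernel is $\mathcal L(A'-P-Q)$, which is zero by hypothesis. Since $\ell(A')=2$, the map is an isomorphism. Hence there is a unique $u_0\in\mathcal L(A')$ with $u_0(P)=1$ and $u_0(Q)=0$, and a unique $v_0\in\mathcal L(A')$ with $v_0(P)=0$ and $v_0(Q)=1$. These form a basis, and their quotient $t_0=v_0/u_0$ satisfies $t_0(P)=0$ and $t_0(Q)=\infty$.

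We check these two values. At $P$, $v_P(v_0)>0$ and $v_P(u_0)=0$, so $t_0(P)=0$. At $Q$, $v_Q(u_0)>0$ and $v_Q(v_0)=0$, so $t_0$ has a pole at $Q$ and $t_0(Q)=\infty$. Thus $t_0(P)\neq t_0(Q)$. The argument in Lemma~\ref{lem:value-set-invariance} shows that $t(P)=t(Q)$ holds if and only if $t_0(P)=t_0(Q)$, because $t$ and $t_0$ differ by a fractional linear transformation $\varphi$, which is a bijection of $\mathbb P^1(\Fq)$. Hence $t(P)\neq t(Q)$.

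The main obstacle is the possible overlap of $\operatorname{supp}(A)$ with $S_0$. The reduction above handles it, but it must be done pair by pair: the choice of $h$ depends on $(P,Q)$. Alternatively, one can work directly with the local values $(fh)(P)$ for a fixed uniformizer-type normalization $h$ at each place. The remaining steps are routine linear algebra.
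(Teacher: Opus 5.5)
Your proof is correct and follows essentially the paper's route. You shift $A$ by a principal divisor, chosen via weak approximation, so that its coefficients vanish at the relevant places, and then use that evaluation $\mathcal L(A')\to\Fq^2$ at $P,Q$ is an isomorphism. The differences are cosmetic: the paper chooses a single $h$ for all of $S_0$ at once, which is possible since $S_0$ is finite, so your remark that the choice \emph{must} be made pair by pair is overstated. The paper also reads off $t(P)\neq t(Q)$ directly from the nonvanishing determinant $u'(P)v'(Q)-u'(Q)v'(P)$, whereas you pass to the dual basis and invoke the fractional-linear invariance from Lemma~\ref{lem:value-set-invariance}.
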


\begin{proof}
By the Weak Approximation Theorem
\cite[Theorem~1.3.1]{MR2464941}, choose $h\in F^\times$ such that
\[
    v_P(h)=-v_P(A)\qquad \forall  P\in S_0,
\]
where $v_P(A)$ denotes the coefficient of $P$ in $A$.
Put
\[
    A'=A+(h),\qquad u'=u/h,\qquad v'=v/h.
\]
Then $\operatorname{supp}(A')\cap S_0=\varnothing$,
$u',v'$ form a basis of $\mathcal L(A')$, and $v'/u'=t$.
Moreover, linear equivalence preserves
\eqref{eq:pairwise-vanishing}.

For distinct $P,Q\in S_0$, consider the evaluation map
\[
    \mathcal L(A')\longrightarrow\Fq^2,
    \qquad f\longmapsto(f(P),f(Q)).
\]
Its kernel is $\mathcal L(A'-P-Q)=\{0\}$.
Since both spaces have dimension $2$, the map is an
isomorphism. Consequently,
\[
    u'(P)v'(Q)-u'(Q)v'(P)\neq0.
\]
Thus,
$v'(P)/u'(P)$ and $v'(Q)/u'(Q)$ are distinct, with a zero
denominator interpreted as $\infty$.
These ratios equal $t(P)$ and $t(Q)$, respectively.
Therefore, $t$ is injective on $S_0$. Inequality~\eqref{eq:injection-value-set} follows from
the injectivity of $t$.
\end{proof}

Following Munuera~\cite{MR1178199}, for a positive integer $m$,
we consider the set of divisors
\[
    \mathcal D_m(S)
    =\{B:0\leq B\leq D,\ \deg B=m\}.
\]
Thus each element of $\mathcal D_m(S)$ is a sum of $m$
distinct rational places in $S$. Let $S_0\subseteq S$ contain at least two
places. We consider the following hypothesis.

\smallskip
\noindent
$C[F,S,S_0,m]$: There exists a class in
$\mathcal D_m(S)/{\sim}$ such that, for any two distinct
places $P,Q\in S_0$, there is a representative
$B_{P,Q}\in\mathcal D_m(S)$ of that class whose support
contains neither $P$ nor $Q$.

\smallskip
When $S_0=S$, this is Munuera's condition~\cite{MR1178199}.
For a proper subset $S_0\subsetneq S$, only the pairs to be
avoided are restricted to $S_0$. The representatives may
still be supported on the full evaluation set $S$.

The following corollary plays a key role in the proof of the main result in Section~\ref{sec:main-length-bound}.

\begin{corollary}
\label{cor:fixed-class-injection}
Let $S_0\subseteq S$ contain at least two places.
Suppose that $C_{\mathcal L}(D,G)$ is MDS and that
$C[F,S,S_0,k-2]$ holds.
Choose a divisor $B\in\mathcal D_{k-2}(S)$ whose class
satisfies this condition.

Let $A_0$ be any divisor linearly equivalent to $G-B$.
Then $\deg A_0=3$, and for every $\Fq$-basis $\{u,v\}$
of $\mathcal L(A_0)$, the quotient $ t=\frac vu$
is injective on $S_0$.
In particular, if $S_0$ contains a pair $P,\sigma(P)$
with $P\neq\sigma(P)$, then
    $[F:\Fq(t)]=3.$
\end{corollary}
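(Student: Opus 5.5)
The plan is to verify the pairwise vanishing hypothesis \eqref{eq:pairwise-vanishing} of Lemma~\ref{lem:injectivity} for the divisor $A_0$ and the set $S_0$. Once that is done, injectivity is immediate, and the final assertion follows from Lemma~\ref{lem:degree-three-quotient}.

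First, the degree count: $\deg A_0=\deg G-\deg B=(k+1)-(k-2)=3$. By Lemma~\ref{lem:value-set-invariance}, injectivity of $v/u$ on $S_0$ depends only on $[A_0]=[G-B]$. It therefore suffices to check \eqref{eq:pairwise-vanishing} for one representative of the class, and we may choose that representative depending on the pair $(P,Q)$. Vanishing of $\ell(A_0-P-Q)$ is in any case a class-invariant statement, since $\ell$ is invariant under linear equivalence.

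Fix distinct $P,Q\in S_0$. By $C[F,S,S_0,k-2]$ there is $B_{P,Q}\in\mathcal D_{k-2}(S)$ with $B_{P,Q}\sim B$ and $P,Q\notin\operatorname{supp}(B_{P,Q})$. Then
\[
A_0-P-Q\sim G-B-P-Q\sim G-(B_{P,Q}+P+Q).
\]
Since $B_{P,Q}$ is a sum of $k-2$ distinct places of $S$ avoiding $P$ and $Q$, and $P,Q\in S_0\subseteq S$ are distinct, the divisor $B_{P,Q}+P+Q$ is a sum of $k$ distinct places of $S$, so it lies in $\mathcal D_k(S)$. The MDS property (Proposition~\ref{prop:mds-vanishing}, equation \eqref{eq:mds-k-zeros}) gives $\ell(G-B_{P,Q}-P-Q)=0$, hence $\ell(A_0-P-Q)=0$. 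Lemma~\ref{lem:injectivity} then yields injectivity of $t=v/u$ on $S_0$ for every basis.

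For the last claim, suppose $P,\sigma(P)\in S_0$ with $P\neq\sigma(P)$. If $[F:\Fq(t)]=2$, then Lemma~\ref{lem:hyperelliptic-divisors}(3) gives $\Fq(t)=\Fq(x)$, so $t$ is fixed by $\sigma$ and $t(\sigma(P))=t(P)$. This contradicts injectivity, so Lemma~\ref{lem:degree-three-quotient} forces degree $3$. One point to check here is that $t\circ\sigma=t$ implies $t(\sigma P)=t(P)$, including when the common value is $\infty$; this holds because $\sigma$ maps the poles of $t$ to poles of $t$. The only real care needed in the whole argument is the distinctness of the $k$ places in $B_{P,Q}+P+Q$, which is exactly what the avoidance condition in $C[F,S,S_0,k-2]$ guarantees.
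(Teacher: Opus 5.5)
Your proposal is correct and follows the paper's proof essentially step for step. For each pair $P,Q\in S_0$ you take the representative $B_{P,Q}$ supplied by $C[F,S,S_0,k-2]$, deduce $\ell(A_0-P-Q)=\ell(G-B_{P,Q}-P-Q)=0$ from the MDS criterion, apply the injectivity criterion, and rule out the quadratic case because $\Fq(t)=\Fq(x)$ forces $t(P)=t(\sigma(P))$. The appeal to Lemma~\ref{lem:value-set-invariance} is unnecessary but harmless: $\ell$ is already invariant under linear equivalence, and the injectivity lemma applies to $A_0$ directly.
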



\begin{proof}
For distinct $P,Q\in S_0$, choose
$B_{P,Q}\in\mathcal D_{k-2}(S)$ with
$B_{P,Q}\sim B$ and $B_{P,Q}\leq D-P-Q$.
Since $A_0\sim G-B_{P,Q}$, Proposition~\ref{prop:mds-vanishing} gives
\[
    \ell(A_0-P-Q)
    =\ell(G-B_{P,Q}-P-Q)=0.
\]
Lemma~\ref{lem:injectivity} proves the first assertion.

By Lemma~\ref{lem:degree-three-quotient}, the degree of $t$
is either $2$ or $3$. In the quadratic case,
$\Fq(t)=\Fq(x)$ by Lemma~\ref{lem:hyperelliptic-divisors},
so $t(P)=t(\sigma(P))$ for every rational place $P$.
If $S_0$ contains a conjugate pair, this contradicts
injectivity. Hence $[F:\Fq(t)]=3$.
\end{proof}

\section{Main Result}
\label{sec:main-length-bound}

We retain the notation of Section~\ref{sec:Preliminary} and put
\begin{equation}
    B(q)=\frac{2q+10\sqrt q+10}{3}.
    \label{eq:Bq-definition}
\end{equation}
The main result is the following.

\begin{theorem}
\label{thm:main-length}
Let $F/\Fq$ be a function field of genus $2$ with full
constant field $\Fq$. Let
$D=P_1+\cdots+P_n$ be a sum of distinct rational places, and let
$G$ be a divisor satisfying
\[
    \operatorname{supp}(G)\cap\operatorname{supp}(D)=\varnothing,
    \qquad \deg G=k+1<n,
    \qquad 3\leq k\leq n-3.
\]
If $C_\mathcal{ L}(D,G)$ is MDS, then
\begin{equation}
    n<B(q)+2=\frac{2q+10\sqrt q+16}{3}.
    \label{eq:main-length-bound}
\end{equation}
Equivalently,
\begin{equation}
    n\leq
    \left\lfloor\frac{2q+10\sqrt q+16}{3}\right\rfloor.
    \label{eq:main-integer-bound}
\end{equation}
\end{theorem}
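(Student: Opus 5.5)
By duality \eqref{eq:ag-duality} we may assume $3\le k\le n/2$, as in \eqref{eq:dimension-reduction}. We argue by contradiction: suppose $n\ge B(q)+2$. The aim is to produce a degree-three divisor $A_0$ whose quotient $t=v/u$ is injective on a large set $S_0\subseteq S$ containing a conjugate pair. Then $[F:\Fq(t)]=3$ by Corollary~\ref{cor:fixed-class-injection}, and $|S_0|\le V([A_0])$ by Lemma~\ref{lem:injectivity}. The contradiction will come from an upper bound on $V([A])$ for classes with $\ell(A-H)=0$. This upper bound is the auxiliary theorem of Section~\ref{sec:uniform-value-set}, which we expect to take roughly the form $V([A])\le B(q)$, i.e.\ about $2N_F/3+O(\sqrt q)$.

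\textbf{Step 1: verifying $C[F,S,S_0,k-2]$ using conjugate pairs.} By Lemma~\ref{lem:hyperelliptic-divisors}(1), every conjugate pair $\{P,\sigma(P)\}\subseteq S$ gives a divisor $P+\sigma(P)\sim H$. Let $S'\subseteq S$ be the union of the conjugate pairs lying entirely in $S$, with $p$ such pairs. Write $k-2=2r+\epsilon$ with $\epsilon\in\{0,1\}$. If $k-2$ is even, take $B=rH$ realized as a sum of $r$ pairs. If it is odd, add one fixed place $R\in S$ and take $B\sim rH+R$. Any representative built from $r$ pairs avoiding $P,Q$ (and not containing $R$ if $P$ or $Q$ equals $R$) stays in the same class. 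Two distinct places $P,Q$ meet at most two pairs, so it suffices that $p\ge r+2$, plus a little slack in the odd case for the choice of $R$. Since $k\le n/2$, we have $r\le n/4$. Once $p\ge n/4+O(1)$, the condition $C[F,S,S_0,k-2]$ holds with $S_0=S'$ (together with the odd-case adjustments) or even with $S_0=S$.

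\textbf{Step 2: counting.} The set $S$ has $n$ places. At most $w\le 6$ of them are Weierstrass points (Lemma~\ref{lem:fixed-rational-places}). The non-Weierstrass places of $S$ whose conjugate lies outside $S$ number at most $N_F-n$. Hence
\[
2p\ge n-6-(N_F-n)=2n-N_F-6\ge 2n-q-1-4\sqrt q-6
\]
by \eqref{eq:genus-two-hasse-weil}. With $n\ge B(q)+2$, this makes $p$ comparable to $q/6$, which beats $r+2\le n/4+2$ for $n$ near $2q/3$. Roughly, we need $2n-N_F\gtrsim n/2$, i.e.\ $n\gtrsim 2N_F/3$, and this is exactly where the constant $2/3$ enters. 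Taking $S_0=S'$, which contains a conjugate pair, Corollary~\ref{cor:fixed-class-injection} gives a degree-three $t$ injective on $S'$, so $2p\le V([A_0])$.

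\textbf{Step 3: bounding $V([A])$ for degree-three maps (main obstacle).} We need an upper bound on the number of values of $t$ that is strong enough to contradict $2p$, or ideally $|S|$. For a separable degree-three map $F\to\mathbb P^1$, count the fibres over $\mathbb P^1(\Fq)$ by how many rational places they contain. Let $a_j$ be the number of rational points of $\mathbb P^1$ whose fibre contains $j$ rational places. Then $\sum_j j\,a_j=N_F$ and $V=\sum_{j\ge1}a_j$. The key input is a lower bound on $a_3+a_2$, or equivalently on the number of rational points of $\mathbb P^1$ with a totally split fibre. Equidistribution of Frobenius in the Galois closure of $F/\Fq(t)$ (Chebotarev with a Weil-type error), handled separately for the $S_3$ and $C_3$ cases, should give $a_3\approx q/6$ or $q/3$ respectively, up to $O(\sqrt q)$ with genus-dependent constants. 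Riemann--Hurwitz bounds the genus of the Galois closure in terms of ramification, which is controlled by $\deg t=3$ and $g=2$. This yields $V\le N_F-2a_3-a_2\le 2q/3+O(\sqrt q)$, i.e.\ $V\le B(q)$. Making the $O(\sqrt q)$ constant explicit enough to reach $10\sqrt q$, uniformly in characteristics $2$ and $3$ (where wild ramification may occur), is where I expect the real work to be. Combining this with Step 2 and rechecking the arithmetic gives $n<B(q)+2$.
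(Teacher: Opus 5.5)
\textbf{There is a genuine gap: Step 3 claims a bound that is false, and Step 1 gives you no way to avoid the classes where it fails.}

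Your Step~3 asserts $V([A])\le B(q)$ for every class with $\ell(A-H)=0$, and this does not hold. Your Chebotarev split ($S_3$ versus $C_3$) omits the case where $F/\Fq(t)$ is not Galois but $\overline F/\overline{\mathbb F}_q(t)$ is cyclic. In that case the normal closure $M$ has arithmetic group $S_3$, constant field $\mathbb F_{q^2}$ and geometric group $A_3$. So the Frobenius of every unramified rational place of $\Fq(t)$ lies in $S_3\setminus A_3$, i.e.\ it is a transposition. Each such fibre then contains exactly one rational place of $F$, and $|V_F(t)|$ is about $q$, not $2q/3$. The paper's closing Remark gives an explicit genus-two example: $q\equiv2\pmod 3$, $z^3=t(t-1)(t-a)^2$, $A=3P_\infty$, with $|V_F(t)|=q+1>B(q)$ for $q=125$. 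This is exactly why the paper's value-set theorem is stated only for classes outside the exceptional set $\mathcal E_F$. For those classes the normal closure has full constant field $\Fq$ and the quadratic resolvent $E=M^{A_3}$ has genus at most $3$, which gives the exact count $3|V_F(t)|=N_F-N_E+2(q+1)+\varepsilon$.

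This breaks your construction because Step~1 produces a single fixed class, $[G-rH]$ or $[G-R-rH]$ with $R$ fixed. Nothing in the MDS hypothesis prevents that class from being exceptional. Two ideas are missing.
\begin{enumerate}
\item The exceptional classes are few. The paper injects $\mathcal E_F$ into the set of order-$3$ subgroups of $\operatorname{Aut}_{\overline{\mathbb F}_q}(\overline F)$, and there are at most $20$ of these (at most $1$ in characteristic $2$).
\item You need a large family of candidate classes, obtained by sacrificing one or two evaluation places. For $k=2s+1$ use $[R+(s-1)H]$ with $S_0=S\setminus\{R\}$. This also removes your unclear provision for $P$ or $Q$ equal to $R$. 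For $k=2s$ use $[R+T+(s-2)H]$ with $S_0=S\setminus\{R,T\}$.
\end{enumerate}
Distinct rational places are inequivalent, and distinct divisors $R+T$ are inequivalent outside $[H]$. So varying $R$ (resp.\ $R,T$) gives $n$ (resp.\ at least $\binom n2-\lfloor n/2\rfloor$) distinct classes $[A_0]$. Once $n>20$ you can choose $[A_0]\notin\mathcal E_F$ and obtain $n-2\le V([A_0])<B(q)$. This is where the ``$+2$'' in the theorem comes from, and why $q<9$ is checked separately. Separately, taking $S_0=S'$ only gives $2p\le V([A_0])$, which yields $n\le 5q/6+O(\sqrt q)$. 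You need $S_0$ to be all of $S$ except for a bounded number of places.
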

The two formulations are equivalent since $B(q)$ is
never an integer, as verified in the proof of
Corollary~\ref{cor:noncyclic-cubic-value-bound}.
The proof constructs a fixed degree-three divisor class whose
associated quotient $t$ is injective on at least $n-2$ evaluation
places. A uniform bound for the cardinality of the set $V_F(t)$ of this
quotient $t$ then yields \eqref{eq:main-length-bound}.

\subsection{The Auxiliary Value-Set Theorem}
Let $\overline{\mathbb F}_q$ be an algebraic closure of $\Fq$.
Write $\overline F=F\overline{\mathbb F}_q$ for the extension
of the constant field to $\overline{\mathbb F}_q$.
Let $\mathcal E_F$ be the set of degree-three divisor classes
$[A]$ with the following property: for a basis $u,v$ of
$\mathcal L(A)$ and $t=v/u$, one has
\[
    [F:\Fq(t)]=3
    \quad\text{and}\quad
    \overline F/\overline{\mathbb F}_q(t)
    \text{ is Galois}.
\]
The extension $ \overline F/\overline{\mathbb F}_q(t)$ is then cyclic of degree $3$.
These conditions are independent of the basis and of the
representative of $[A]$ by
Lemma~\ref{lem:value-set-invariance}.
We call the elements of $\mathcal E_F$ \emph{exceptional classes}
and refer to $\mathcal E_F$ as the \emph{exceptional set}.

 The following result supplies the estimate of $|V_F(t)|$ needed
for the proof of Theorem~\ref{thm:main-length}. It also bounds the number of divisor
classes that must be avoided.
\begin{theorem}
\label{thm:uniform-value-set}
For every genus-two function field $F/\Fq$ with full
constant field $\Fq$,
\[
    |\mathcal E_F|\leq20.
\]
In characteristic $2$, one has $|\mathcal E_F|\leq1$.
If $A$ is a divisor of degree $3$ with
$[A]\notin\mathcal E_F$, then for every basis $u,v$ of
$\mathcal L(A)$, the quotient $t=v/u$ satisfies
\begin{equation}
    |V_F(t)|<B(q).
    \label{eq:uniform-value-set-bound}
\end{equation}
\end{theorem}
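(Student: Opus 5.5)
The plan is to study the degree-three cover $\pi_t\colon F\to\Fq(t)$ via Chebotarev-type counting over the Galois closure. Let $t=v/u$ with $[F:\Fq(t)]=3$; the quadratic case is trivial, since then $\Fq(t)=\Fq(x)$ and $|V_F(t)|\le q+1$, but we still need $<B(q)$. I will therefore first handle the quadratic case separately. There $V_F(t)$ consists of the points of $\mathbb P^1(\Fq)$ that split or ramify in $F/\Fq(x)$. Their number is $(N_F+w)/2\le (q+1+4\sqrt q+6)/2$, which is below $B(q)$ once one compares $(q+7+4\sqrt q)/2$ with $(2q+10\sqrt q+10)/3$; this holds for all $q$.

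For the cubic case, let $M$ be the Galois closure of $F/\Fq(t)$ (separable by Lemma~\ref{lem:degree-three-quotient}). If $\overline F/\overline{\mathbb F}_q(t)$ is not Galois, the geometric monodromy group is $S_3$, and $M$ has constant field $\Fq$ or a quadratic extension $\mathbb F_{q^2}$. A rational point $\alpha\in\mathbb P^1(\Fq)$ unramified in $M$ lies in $V_F(t)$ iff its Frobenius class fixes a point of $\{1,2,3\}$. With arithmetic group $S_3$ this happens for the identity and the transpositions, i.e.\ density $4/6=2/3$. If the constant field extension is quadratic, the Frobenius lies in the coset $S_3\setminus A_3$, and every element there fixes a point, so the density is $1$. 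I must rule that out or handle it. I expect instead that in that case the arithmetic group is $S_3$ with Frobenius confined to transpositions, giving $|V_F(t)|$ about $q$ — which would break the bound. So the key claim to verify is that the constant field of $M$ is $\Fq$. I will try to prove it from the fact that $M\supseteq F$ and the quadratic resolvent $K$ of $M/\Fq(t)$ (the fixed field of $A_3$) is a genus-$g_K$ curve. Since $M/\Fq(t)$ has $F$ with full constant field $\Fq$ and $M=FK$, the constant field of $M$ is that of $K$. That is $\Fq$ unless $K=\mathbb F_{q^2}(t)$, in which case the geometric group would be $A_3$ — the excluded exceptional case. Good: the non-exceptional case gives density exactly $2/3$.

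Next is the effective counting. Apply Hasse--Weil to $M$, $F$, $K$: $N_F=\#\{\alpha: \text{Frob}=1\}\cdot3+\#\{\text{transp.}\}\cdot1+(\text{ramified contributions})$, and $N_K=2\#\{1\}+2\#\{3\text{-cycles}\}+\ldots$. A cleaner route is to write $|V_F(t)|$ as $N_F-2a-\text{(ramified correction)}$, where $a$ is the number of totally split points, and to use $N_M=6a+\text{(ram.)}$. Riemann--Hurwitz gives $g_K$ and $g_M$. The different of $F/\Fq(t)$ has degree $2g_F-2+6=8$, which bounds the ramified points by $8$ and yields $g_M\le 10$-ish and $g_K\le 3$. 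Then $|V_F(t)|\le N_F-\tfrac13(N_M-\text{ram})$, and with $N_F\le q+1+4\sqrt q$ and $N_M\ge q+1-2g_M\sqrt q$ this gives $\tfrac23 q+O(\sqrt q)$. The constant $10/3$ in $\sqrt q$ should come out after optimizing with the exact $g_M$. Alternatively, one can combine $N_F$, $N_K$ and $N_M$ with the character relation so that errors partly cancel, giving a coefficient like $(2\cdot 4+2g_K+\dots)/3$. This bookkeeping, making the explicit constants $10\sqrt q$ and $10$ come out with strict inequality, is the main obstacle, including wild ramification in characteristic $2,3$.

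Finally, I bound $|\mathcal E_F|$. An exceptional class gives a geometrically cyclic cubic map $F\to\mathbb P^1$, which in characteristic $\ne3$ is totally ramified at its branch points. Riemann--Hurwitz $2=3(-2)+2r$ forces $r=4$ totally ramified points, with $A\sim3R$-type relations: $(t)_\infty=3R$ and $(t)_0=3R'$, so $3(R-R')\sim0$. The automorphism of order $3$ commutes with $\sigma$ and induces an order-$3$ automorphism of $\mathbb P^1$ preserving the $6$ branch points of $x$. I will count such automorphisms of $F$ together with classes: there are at most a bounded number of order-3 subgroups of $\mathrm{Aut}(F)$, and each gives a bounded number of classes $[A]$ (from choices of totally ramified $R$). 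This yields $\le 20$. In characteristic $2$, rigidity of the genus-2 automorphism group, with only one relevant configuration, leaves at most one class. In characteristic $3$ the cyclic cubic is Artin--Schreier and the count is done separately via the different.
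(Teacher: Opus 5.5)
You have the same overall architecture as the paper: the quadratic case via $(N_F+w)/2$, a geometric $S_3$-closure forcing the constant field of $M$ to be $\Fq$, and comparison with the quadratic resolvent $E=M^{A_3}$ (your $K$). Your quadratic case and constant-field step are fine.

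\textbf{Gap 1: the explicit value-set bound.} The estimate $3|V_F(t)|\le 2q+10\sqrt q+10$, which is the real content, is deferred as ``bookkeeping''. The route you call cleaner cannot produce it. Since $[M:F]=2$, Hurwitz gives $2g_M-2=4+\deg\operatorname{Diff}(M/F)$, so $g_M\ge 4$ (in fact $g_M=4+g(E)$). Using $|V_F(t)|\approx N_F-N_M/3$ with separate Hasse--Weil bounds for $F$ and $M$ then gives a $\sqrt q$-coefficient of at least $4+8/3=20/3$, whatever the exact $g_M$.

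What is needed is the exact permutation-character identity using $E$ instead of $M$. Put $K=\Fq(t)$. At every rational place $\mathfrak p$ of $K$ that is unramified in $F$ or has conorm $2P_1+P_2$, one has $3\cdot\mathbf 1_{\{a_{\mathfrak p}>0\}}=a_{\mathfrak p}-b_{\mathfrak p}+2$. At a totally ramified place the defect is $b_{\mathfrak p}\le 2$. There are at most $4$ such places, since each contributes at least $2$ to $\deg\operatorname{Diff}(F/K)=8$. Summing gives $3|V_F(t)|=N_F-N_E+2(q+1)+\varepsilon$ with $0\le\varepsilon\le 8$. With Hasse--Weil for $F$ and $E$, the $\sqrt q$-coefficient becomes $(2g(F)+2g(E))/3$.

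This still needs $g(E)\le 3$ in every characteristic. ``At most $8$ ramified points'' does not bound $\deg\operatorname{Diff}(E/K)$ when ramification is wild. The paper proves $\deg\operatorname{Diff}(F/K)=\deg\operatorname{Diff}(E/K)+\tfrac12\deg\operatorname{Diff}(M/E)$. It uses Hilbert's different formula and the partition of $S_3\setminus\{1\}$ among the three transposition subgroups and $A_3$. Strict inequality also needs the observation that $B(q)$ is never an integer.

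\textbf{Gap 2: the bound on the exceptional set.} You let each order-$3$ subgroup produce ``a bounded number of classes (from choices of totally ramified $R$)''. With more than one class per subgroup you cannot reach $20$.

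The missing step is injectivity. Suppose two exceptional classes give the same subgroup $T\le\operatorname{Aut}_{\overline{\mathbb F}_q}(\overline F)$. Then $\overline{\mathbb F}_q(t)=\overline F^{T}=\overline{\mathbb F}_q(t')$, so $t'$ is a M\"obius transform of $t$. The pole divisors represent the classes, because $(t)_\infty=A+(u)$, and they are linearly equivalent over $\overline{\mathbb F}_q$. Invariance of $\ell$ under constant field extension then gives $[A]=[A']$ over $\Fq$. In particular, the divisors $3R$ at the totally ramified points are all linearly equivalent and give no new classes.

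This must be done in $\operatorname{Aut}(\overline F)$, not $\operatorname{Aut}(F)$. The order-$3$ automorphism need not be defined over $\Fq$, as in the paper's Kummer example with $q\equiv 2\pmod 3$.

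The number $20$ then needs two facts you do not supply:
\begin{enumerate}
\item Centrality of $\sigma$ gives each element of the reduced group at most one lift of order $3$.
\item In odd characteristic, an order-$3$ M\"obius map permutes the six distinct branch points of $x$ with cycle type $(3)(3)$. There are $40$ such permutations, hence at most $20$ subgroups.
\end{enumerate}
This argument covers characteristic $3$ as well, so no separate Artin--Schreier analysis is needed. In characteristic $2$, ``rigidity'' must be replaced by an actual input such as the Cardona--Nart--Pujol\`as classification. There only $C_2\times S_3$ has order divisible by $3$, and it has a unique subgroup of order $3$.
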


The proof of Theorem~\ref{thm:uniform-value-set} is independent
of the coding assumptions and will be given in Section~\ref{sec:uniform-value-set}.
In the constructions below, the quotients $v/u$ will
have degree $3$.

\subsection{Counting Conjugate Pairs and Divisor Classes}

Let $S=\{P_1,\ldots,P_n\}$ as before, and let $r$ be the number of
unordered pairs
\[
    \{P,\sigma(P)\}\subseteq S,\qquad P\neq\sigma(P).
\]
Distinct such pairs are disjoint.
As in Lemma~\ref{lem:fixed-rational-places}, let $w$ be the
number of rational places of $F$ fixed by $\sigma$.
To construct the auxiliary divisors, we need enough complete
conjugate pairs in $S$. The following count is the same orbit-counting argument used
in~\cite[Proposition~3.3]{MR1454971}, written here for an arbitrary
evaluation set $S$.
\begin{lemma}
\label{lem:complete-pair-count}
With the notation above, one has
\begin{equation}
    r\geq n-\frac{N_F+w}{2}
    \geq n-\frac{q+7+4\sqrt q}{2}.
    \label{eq:complete-pair-count}
\end{equation}
\end{lemma}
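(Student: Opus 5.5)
The plan is to count $\sigma$-orbits on the set $\mathbb P_F^{(1)}$ of rational places. The involution $\sigma$ permutes $\mathbb P_F^{(1)}$: it is an automorphism of $F$ fixing $\Fq$, so it preserves degrees. The orbits are of two kinds: the $w$ fixed places, and $(N_F-w)/2$ unordered conjugate pairs $\{P,\sigma(P)\}$ with $P\neq\sigma(P)$. Each place of $S$ lies in exactly one orbit. We partition the orbits meeting $S$ into three types: fixed places lying in $S$, say $w_S\leq w$ of them; complete pairs contained in $S$, of which there are $r$; and incomplete pairs meeting $S$ in exactly one place, say $s$ of them. This gives
\[
    n = w_S + 2r + s.
\]

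Next, the total number of non-fixed pairs gives $r+s\leq (N_F-w)/2$. Hence
\[
    n \leq w + r + (r+s) \leq w + r + \frac{N_F-w}{2} = r + \frac{N_F+w}{2},
\]
which rearranges to the first inequality $r\geq n-(N_F+w)/2$.

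For the second inequality, substitute the Hasse--Weil bound \eqref{eq:genus-two-hasse-weil} with $g=2$, namely $N_F\leq q+1+4\sqrt q$, together with $w\leq 6$ from Lemma~\ref{lem:fixed-rational-places}. This yields $(N_F+w)/2\leq (q+7+4\sqrt q)/2$.

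No step here is a real obstacle. The only point needing care is that ramified fixed places are counted once, in $w$, and not as pairs. This is exactly why the bound uses $(N_F+w)/2$ rather than $N_F/2$.
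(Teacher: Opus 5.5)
Your proposal is correct and is essentially the paper's proof. It uses the same $\sigma$-orbit count on $\mathbb P_F^{(1)}$: fixed places contribute at most $w$, complete pairs contribute $2r$, and every other size-two orbit contributes at most one place. This gives $n-r\leq (N_F+w)/2$, and the Hasse--Weil bound together with $w\leq 6$ then yields the second inequality.
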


\begin{proof}
The action of $\sigma$ on $\mathbb P_F^{(1)}$ has $w$ orbits
of size one and $(N_F-w)/2$ orbits of size two.
Exactly $r$ of the latter are contained in $S$.
Every other orbit of size two contributes at most one place
to $S$, and the fixed places contribute at most $w$.
Therefore
\[
    n-r\leq
    r+\left(\frac{N_F-w}{2}-r\right)+w
    =\frac{N_F+w}{2}.
\]
The first inequality holds.
The second inequality follows from $w\leq6$ and the
Hasse--Weil bound \eqref{eq:genus-two-hasse-weil}.
\end{proof}
We also need enough distinct divisor classes to avoid the
exceptional set $\mathcal E_F$. This leads us to count the
classes represented by one or two evaluation places.
\begin{lemma}
\label{lem:candidate-class-count}
Define
\[
    \mathcal C_1(S)=\{[R]:R\in S\},
    \qquad
    \mathcal C_2(S)=\{[R+T]:R,T\in S,\ R\neq T\}.
\]
Then
\begin{equation}
    |\mathcal C_1(S)|=n
    \label{eq:degree-one-class-count}
\end{equation}
and
\begin{equation}
    |\mathcal C_2(S)|
    =\binom n2-r+\delta_r
    \geq\binom n2-\left\lfloor\frac n2\right\rfloor,
    \label{eq:degree-two-class-count}
\end{equation}
where $\delta_r=1$ if $r>0$ and $\delta_r=0$ if $r=0$.
\end{lemma}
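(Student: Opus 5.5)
The plan is to count the classes directly, using the dimension criteria of Lemma~\ref{lem:hyperelliptic-divisors}(2) and the fact that $g\geq1$.

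\emph{Degree one.} For \eqref{eq:degree-one-class-count}, suppose $R\sim R'$ with $R\neq R'$ in $S$. Then $R=R'+(f)$ for some nonconstant $f$. This $f$ has a single simple pole, so $[F:\Fq(f)]=1$ and $F$ would be rational. That contradicts $g(F)=2$, so the map $R\mapsto[R]$ is injective on $S$.

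\emph{Degree two.} For \eqref{eq:degree-two-class-count}, consider the map from unordered pairs $\{R,T\}\subseteq S$ with $R\neq T$ to classes $[R+T]$. The domain has $\binom n2$ elements. Distinct pairs give distinct effective divisors $R+T$. By the final assertion of Lemma~\ref{lem:hyperelliptic-divisors}(2), two distinct pairs can have the same class only if both are equivalent to $H$. Hence every fibre over a class other than $[H]$ is a singleton.

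\emph{The fibre over $[H]$.} I claim this fibre consists exactly of the $r$ conjugate pairs in $S$.
\begin{itemize}
\item By \eqref{eq:conjugate-pair}, each conjugate pair $\{P,\sigma(P)\}$ with $P\neq\sigma(P)$ satisfies $P+\sigma(P)\sim H$.
\item Conversely, suppose $R+T\sim H$. Then $R+T=H+(h)$, so $h^{-1}\in\mathcal L(H)=\langle1,x\rangle$, say $h^{-1}=a+bx$.
\item If $b=0$, then $h$ is constant and $R+T=H$ is a fibre of $x$ over $\infty$.
\item If $b\neq0$, then $R+T=(a+bx)_0$ is the fibre of $x$ over $-a/b$.
\item In either case $R+T=\operatorname{Con}(\text{rational place})$. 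Since $R\neq T$, this forces $T=\sigma(R)$ with $R$ not fixed.
\end{itemize}
So the $r$ conjugate pairs collapse to one class when $r>0$, and no pair maps to $[H]$ when $r=0$. This gives $|\mathcal C_2(S)|=\binom n2-r+\delta_r$.

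\emph{The inequality.} Distinct conjugate pairs are disjoint, so $r\leq\lfloor n/2\rfloor$. If $r>0$, then $-r+\delta_r=1-r\geq-\lfloor n/2\rfloor$. If $r=0$, the correction is $0\geq-\lfloor n/2\rfloor$. Either way the stated lower bound follows.

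The main step is identifying the fibre over $[H]$. It uses the explicit basis $\mathcal L(H)=\langle1,x\rangle$ together with the description of fibres of $x$ as conorms, noted in the proof of Lemma~\ref{lem:hyperelliptic-divisors}(1).
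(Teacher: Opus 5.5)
Your proof is correct and is essentially the paper's argument: degree-one injectivity via a function with a single simple pole, and for degree two the observation from Lemma~\ref{lem:hyperelliptic-divisors}(2) that only the fibre over $[H]$ can have more than one element, that fibre being exactly the $r$ conjugate pairs. Your explicit check of $R+T\sim H\Rightarrow T=\sigma(R)$ via $\mathcal L(H)=\langle1,x\rangle$ and conorms fills in a step the paper only asserts. There is one notational slip: from $R+T=H+(h)$ one gets $(h)+H\geq0$, so it is $h$, not $h^{-1}$, that lies in $\mathcal L(H)$. Taking $h=a+bx$ is exactly what your identity $R+T=(a+bx)_0$ uses, so the rest of your argument is unaffected.
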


\begin{proof}
Distinct rational places cannot be linearly equivalent.
Indeed, for any distinct rational places $R$ and $T$, an equivalence $R\sim T$ would give
a nonconstant function $z$ with pole divisor $T$, of degree one.
The degree formula
\cite[Theorem~1.4.11]{MR2464941} would then imply that $[F:\Fq(z)]=1$, contradicting $g(F)=2$.
This proves \eqref{eq:degree-one-class-count}.

There are $\binom n2$ distinct effective divisors $R+T$
with $R,T\in S$ and $R\neq T$.
By Lemma~\ref{lem:hyperelliptic-divisors}, two different
such divisors can be linearly equivalent only in the class
$[H]$.

Moreover,
\[
    R+T\sim H
    \quad\Longleftrightarrow\quad
    T=\sigma(R).
\]

Thus precisely $r$ of the divisors under consideration belong
to $[H]$, and every other divisor represents a different class.
This gives the equality in \eqref{eq:degree-two-class-count}.
The lower bound follows from $r\leq\lfloor n/2\rfloor$.
\end{proof}

\subsection{Constructing Injective Functions on Some Evaluation Places}

We now use conjugate pairs contained in $S$ to verify
$C[F,S,S_0,k-2]$ for a large subset $S_0\subseteq S$.

Consider first the case of even dimension. With enough
conjugate pairs contained in $S$, we can construct a function $t=v/u$ of degree $3$
that is injective on all but two evaluation places.


\begin{proposition}[Even dimension]
\label{prop:even-fixed-quotient}
Suppose that $C_{\mathcal L}(D,G)$ is MDS, $k=2s\geq4$, and $r\geq s+2$.
For any distinct $R,T\in S$, put $S_0=S\setminus\{R,T\}$.
Then the following statements hold.
\begin{enumerate}
    \item The condition $C[F,S,S_0,k-2]$ holds with $[R+T+(s-2)H]$
    as the required divisor class.

    \item For every divisor $A_0$ satisfying
    \begin{equation}
        A_0\sim G-R-T-(s-2)H,
        \label{eq:even-candidate-class}
    \end{equation}
    one has $\deg A_0=3$, and the quotient $t=v/u$ of any
    $\Fq$-basis $\{u,v\}$ of $\mathcal L(A_0)$ is injective
    on $S_0$ and has degree $3$. In particular,
    \[
        n-2\leq V([A_0]).
    \]

    \item If, in addition,
    \begin{equation}
        \binom n2-\left\lfloor\frac n2\right\rfloor>20,
        \label{eq:even-class-hypothesis}
    \end{equation}
    then
    \begin{equation}
        n<B(q)+2.
        \label{eq:even-conditional-bound}
    \end{equation}
\end{enumerate}
\end{proposition}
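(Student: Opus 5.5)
The plan is to prove the three parts in order: (1) and (2) are direct constructions from the conjugate pairs in $S$, and (3) is a counting argument against the exceptional set. Throughout, $k-2=2s-2$, and the class $[R+T+(s-2)H]$ has degree $2+2(s-2)=k-2$, as required.

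For (1), fix distinct $P,Q\in S_0$. I build a representative of $[R+T+(s-2)H]$ in $\mathcal D_{k-2}(S)$ whose support avoids $P$ and $Q$. The $r$ unordered conjugate pairs $\{P',\sigma(P')\}\subseteq S$ are pairwise disjoint. Hence at most four of them meet the set $\{R,T,P,Q\}$, and at least $r-4\geq s-2$ pairs are disjoint from it. Choose $s-2$ such pairs and set
\[
    B_{P,Q}=R+T+\sum_{i=1}^{s-2}\bigl(P_i'+\sigma(P_i')\bigr).
\]
The chosen pairs are disjoint from each other and from $R,T$, so $B_{P,Q}$ is a sum of $k-2$ distinct places of $S$. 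Thus $0\leq B_{P,Q}\leq D$, and its support contains neither $P$ nor $Q$. By \eqref{eq:conjugate-pair}, each pair is linearly equivalent to $H$, so $B_{P,Q}\sim R+T+(s-2)H$. This verifies $C[F,S,S_0,k-2]$.

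For (2), take $B$ to be any one of the representatives $B_{P,Q}$ from (1). Then $G-B\sim G-R-T-(s-2)H$, so any $A_0$ satisfying \eqref{eq:even-candidate-class} is linearly equivalent to $G-B$, and $\deg A_0=(k+1)-(k-2)=3$. Corollary~\ref{cor:fixed-class-injection} then shows that $t=v/u$ is injective on $S_0$. For the degree statement, we need a conjugate pair inside $S_0$. At most two of the $r\geq s+2\geq4$ pairs meet $\{R,T\}$, so $S_0$ contains at least two complete pairs. The last clause of Corollary~\ref{cor:fixed-class-injection} therefore gives $[F:\Fq(t)]=3$. Injectivity gives $n-2=|S_0|\leq V([A_0])$.

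For (3), the only real point is to choose $R,T$ so that the class $[G-R-T-(s-2)H]$ avoids the exceptional set $\mathcal E_F$. The map
\[
    [R+T]\longmapsto[G-(s-2)H]-[R+T]
\]
is injective on divisor classes. So the number of distinct candidate classes, as $\{R,T\}$ ranges over unordered pairs of distinct places in $S$, equals $|\mathcal C_2(S)|$. By Lemma~\ref{lem:candidate-class-count} and \eqref{eq:even-class-hypothesis},
\[
    |\mathcal C_2(S)|\geq\binom n2-\left\lfloor\frac n2\right\rfloor>20\geq|\mathcal E_F|,
\]
where the last inequality is Theorem~\ref{thm:uniform-value-set}. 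Hence some pair $R,T$ gives $[A_0]\notin\mathcal E_F$. For that choice, (2) together with \eqref{eq:uniform-value-set-bound} gives $n-2\leq V([A_0])<B(q)$, which is \eqref{eq:even-conditional-bound}.

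I do not expect a serious obstacle. The step needing care is the pair count in (1): one must check that $r\geq s+2$ leaves $s-2$ pairs after discarding those meeting all four of $R,T,P,Q$, and that the resulting places are distinct.
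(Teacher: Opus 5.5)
Your proposal is correct and follows essentially the same argument as the paper. Part (1) chooses the same $s-2$ conjugate pairs avoiding $R,T,P,Q$. Part (2) applies Corollary~\ref{cor:fixed-class-injection} using the $r-2\geq 2$ pairs that survive in $S_0$, and part (3) uses the injectivity of $[R+T]\mapsto[G-R-T-(s-2)H]$ on $\mathcal C_2(S)$ together with $|\mathcal E_F|\leq 20$.
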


\begin{proof}
\begin{enumerate}
    \item Fix distinct $R,T\in S$ and put
    $S_0=S\setminus\{R,T\}$.
    For distinct $P,Q\in S_0$, the four places $R,T,P,Q$
    meet at most four conjugate pairs contained in $S$.
    Since $r\geq s+2$, at least  $r-4\geq s-2$
    pairs avoid all four places. Choose $s-2$ such pairs
    and put
    \[
        U_{P,Q}
        =\sum_{i=1}^{s-2}\bigl(Q_i+\sigma(Q_i)\bigr).
    \]
    All places in this sum are distinct and lie in
    $S\setminus\{R,T,P,Q\}$. By \eqref{eq:conjugate-pair},
    \[
        U_{P,Q}\sim(s-2)H.
    \]
    Hence the divisor $B_{P,Q}=R+T+U_{P,Q}$ satisfies
    \[
        0\leq B_{P,Q}\leq D-P-Q,
        \qquad
        \deg B_{P,Q}=k-2,
    \]
    and
    \[
        B_{P,Q}\sim R+T+(s-2)H.
    \]
    Thus $C[F,S,S_0,k-2]$ holds with the stated class.

    \item Let $A_0$ satisfy \eqref{eq:even-candidate-class}.
    Then
    \[
        \deg A_0=(2s+1)-2-2(s-2)=3.
    \]
    By part~(1) and Corollary~\ref{cor:fixed-class-injection},
    the quotient $t=v/u$ of any $\Fq$-basis $\{u,v\}$ of
    $\mathcal L(A_0)$ is injective on $S_0$.
    Removing $R,T$ destroys at most two conjugate pairs,
    so $S_0$ contains at least $r-2\geq s\geq2$
    such pairs. The same corollary therefore gives
    $[F:\Fq(t)]=3$. Moreover,
    \begin{equation}
        n-2\leq |V_F(t)|=V([A_0]).
        \label{eq:even-injective-values}
    \end{equation}

    \item Now allow $R,T$ to vary. The map
    \[
        [R+T]\longmapsto[G-R-T-(s-2)H]
    \]
    is injective on $\mathcal C_2(S)$. Indeed, equality
    of two image classes implies equality of the
    corresponding classes $[R+T]$.

    By Lemma~\ref{lem:candidate-class-count} and
    \eqref{eq:even-class-hypothesis}, its image contains
    more than $20$ classes. Since $|\mathcal E_F|\leq20$,
    we may choose $R,T$ so that the corresponding class
    $[A_0]$ lies outside $\mathcal E_F$.
    For this choice, Theorem~\ref{thm:uniform-value-set}
    and \eqref{eq:even-injective-values} give
    \[
        n-2\leq V([A_0])<B(q).
    \]
    Hence $n<B(q)+2$.
\end{enumerate}
\end{proof}

In odd dimension, the construction requires setting aside
only one evaluation place. This gives a stronger bound under
the corresponding pair and class conditions.

\begin{proposition}[Odd dimension]
\label{prop:odd-fixed-quotient}
Suppose that $C_{\mathcal L}(D,G)$ is MDS, $k=2s+1\geq3$, and  $r\geq s+2.$
For any $R\in S$, put $S_0=S\setminus\{R\}$.
Then the following statements hold.
\begin{enumerate}
    \item The condition $C[F,S,S_0,k-2]$ holds with $[R+(s-1)H]$
    as the required divisor class.

    \item For every divisor $A_0$ satisfying
    \begin{equation}
        A_0\sim G-R-(s-1)H,
        \label{eq:odd-candidate-class}
    \end{equation}
    one has $\deg A_0=3$, and the quotient $t=v/u$ of any
    $\Fq$-basis $\{u,v\}$ of $\mathcal L(A_0)$ is injective
    on $S_0$ and has degree $3$. In particular,
    \[
        n-1\leq V([A_0]).
    \]

    \item If, in addition, $n>20$, then
    \begin{equation}
        n<B(q)+1.
        \label{eq:odd-conditional-bound}
    \end{equation}
\end{enumerate}
\end{proposition}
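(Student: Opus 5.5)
The proof mirrors Proposition~\ref{prop:even-fixed-quotient}, with the single place $R$ taking the role of the pair $R,T$.

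For part~(1), fix $R\in S$, put $S_0=S\setminus\{R\}$, and let $P,Q\in S_0$ be distinct. The three places $R,P,Q$ meet at most three of the $r$ conjugate pairs contained in $S$. Since $r\geq s+2$, at least $r-3\geq s-1$ pairs avoid all three places. Choose $s-1$ of them, say $\{Q_i,\sigma(Q_i)\}$, and put
\[
    B_{P,Q}=R+\sum_{i=1}^{s-1}\bigl(Q_i+\sigma(Q_i)\bigr).
\]
All places in this sum are distinct and lie in $S\setminus\{P,Q\}$. Hence $0\leq B_{P,Q}\leq D-P-Q$ and $\deg B_{P,Q}=2s-1=k-2$. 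By \eqref{eq:conjugate-pair}, $B_{P,Q}\sim R+(s-1)H$. This verifies $C[F,S,S_0,k-2]$ with the stated class. When $s=1$ the sum is empty, and the argument still works.

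For part~(2), let $A_0\sim G-R-(s-1)H$. Then
\[
    \deg A_0=(2s+2)-1-2(s-1)=3.
\]
By Corollary~\ref{cor:fixed-class-injection} and part~(1), the quotient $t=v/u$ is injective on $S_0$. Removing $R$ destroys at most one conjugate pair, so $S_0$ still contains at least $r-1\geq s+1\geq 2$ pairs. The same corollary then gives $[F:\Fq(t)]=3$, and injectivity yields $n-1=|S_0|\leq |V_F(t)|=V([A_0])$.

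For part~(3), let $R$ vary over $S$. The map
\[
    [R]\longmapsto[G-R-(s-1)H]
\]
is injective on $\mathcal C_1(S)$, since equality of two image classes forces $[R]=[R']$. By \eqref{eq:degree-one-class-count}, $|\mathcal C_1(S)|=n>20$. Since $|\mathcal E_F|\leq 20$ by Theorem~\ref{thm:uniform-value-set}, some choice of $R$ gives $[A_0]\notin\mathcal E_F$. For that choice, Theorem~\ref{thm:uniform-value-set} gives $n-1\leq V([A_0])<B(q)$, so $n<B(q)+1$.

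There is no real obstacle here; everything reduces to Corollary~\ref{cor:fixed-class-injection} and Theorem~\ref{thm:uniform-value-set}. The only points to check are the counting bound $r-3\geq s-1$ and that at least two conjugate pairs survive in $S_0$, so that the degree-$3$ conclusion of the corollary applies.
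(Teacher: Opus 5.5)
Your proposal is correct and follows the paper's proof essentially step for step. It uses the same $s-1$ conjugate pairs avoiding $R,P,Q$ in part~(1), the same degree count and surviving-pair argument via Corollary~\ref{cor:fixed-class-injection} in part~(2), and the same injectivity of $[R]\mapsto[G-R-(s-1)H]$ on $\mathcal C_1(S)$ combined with $|\mathcal E_F|\leq 20$ in part~(3).
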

\begin{proof}
\begin{enumerate}
    \item Fix $R\in S$ and put $S_0=S\setminus\{R\}$.
    For distinct $P,Q\in S_0$, the three places $R,P,Q$
    meet at most three conjugate pairs contained in $S$.
    Since $r\geq s+2$, at least $  r-3\geq s-1$
    pairs avoid all three places. Choose $s-1$ such pairs
    and put
    \[
        U_{P,Q}
        =\sum_{i=1}^{s-1}\bigl(Q_i+\sigma(Q_i)\bigr).
    \]
    All places in this sum are distinct and lie in
    $S\setminus\{R,P,Q\}$. By \eqref{eq:conjugate-pair},
    \[
        U_{P,Q}\sim(s-1)H.
    \]
    Hence the divisor $B_{P,Q}=R+U_{P,Q}$ satisfies
    \[
        0\leq B_{P,Q}\leq D-P-Q,
        \qquad
        \deg B_{P,Q}=k-2,
    \]
    and
    \[
        B_{P,Q}\sim R+(s-1)H.
    \]
    Thus $C[F,S,S_0,k-2]$ holds with the stated class.

    \item Let $A_0$ satisfy \eqref{eq:odd-candidate-class}.
    Then
    \[
        \deg A_0=(2s+2)-1-2(s-1)=3.
    \]
    By part~(1) and Corollary~\ref{cor:fixed-class-injection},
    the quotient $t=v/u$ of any $\Fq$-basis $\{u,v\}$ of
    $\mathcal L(A_0)$ is injective on $S_0$.
    Removing $R$ destroys at most one conjugate pair,
    so $S_0$ contains at least
        $r-1\geq s+1\geq2$
    such pairs. The same corollary therefore gives
    $[F:\Fq(t)]=3$. Moreover,
    \begin{equation}
        n-1\leq |V_F(t)|=V([A_0]).
        \label{eq:odd-injective-values}
    \end{equation}

    \item Now allow $R$ to vary. The map
    \[
        [R]\longmapsto[G-R-(s-1)H]
    \]
    is injective on $\mathcal C_1(S)$.
    Indeed, equality of two image classes implies
    equality of the corresponding classes $[R]$.
    By Lemma~\ref{lem:candidate-class-count}, its image
    contains $n$ distinct classes.
    If $n>20$, then $|\mathcal E_F|\leq20$ allows us
    to choose $R$ so that the corresponding class
    $[A_0]$ lies outside $\mathcal E_F$.
    For this choice, Theorem~\ref{thm:uniform-value-set}
    and \eqref{eq:odd-injective-values} give
    \[
        n-1\leq V([A_0])<B(q).
    \]
    Hence $n<B(q)+1$.
\end{enumerate}
\end{proof}

\subsection{Completion of the Proof}

\begin{proof}[Proof of Theorem~\ref{thm:main-length}]
By the duality reduction in \eqref{eq:ag-duality} and
\eqref{eq:dimension-reduction}, we may replace the code
by its dual if necessary and assume
\[
    3\leq k\leq n/2.
\]
This replacement preserves the length and the hypotheses
of the theorem.

\smallskip
\noindent
\textit{Case 1: $q\geq9$.}
Suppose, for a contradiction, that
\begin{equation}
    n\geq B(q)+2
      =\frac{2q+10\sqrt q+16}{3}.
    \label{eq:contrary-length-assumption}
\end{equation}
By Lemma~\ref{lem:complete-pair-count} and $k\leq n/2$, we have
\begin{align}
    r-\left\lfloor\frac k2\right\rfloor-2
    &\geq \frac{3n}{4}
       -\frac{q+7+4\sqrt q}{2}-2 \notag\\
    &\geq \frac{\sqrt q-3}{2}
     \geq0.
    \label{eq:sufficient-pair-count}
\end{align}
Thus
\begin{equation}
    r\geq\left\lfloor\frac k2\right\rfloor+2.
    \label{eq:pair-threshold}
\end{equation}
Moreover, $q\geq9$ and
\eqref{eq:contrary-length-assumption} imply
\[
    n\geq\left\lceil B(9)+2\right\rceil
      =\left\lceil\frac{64}{3}\right\rceil
      =22.
\]
Consequently,
\[
    n>20,
    \qquad
    \binom n2-\left\lfloor\frac n2\right\rfloor>20.
\]

If $k=2s$ is even, then
\eqref{eq:pair-threshold} gives $r\geq s+2$.
Part~(3) of Proposition~\ref{prop:even-fixed-quotient}
therefore yields
\[
    n<B(q)+2,
\]
contrary to \eqref{eq:contrary-length-assumption}.

If $k=2s+1$ is odd, then
\eqref{eq:pair-threshold} again gives $r\geq s+2$.
Since $n>20$, part~(3) of
Proposition~\ref{prop:odd-fixed-quotient} yields
\[
    n<B(q)+1,
\]
which also contradicts
\eqref{eq:contrary-length-assumption}.
This proves \eqref{eq:main-length-bound} for $q\geq9$.

\smallskip
\noindent
\textit{Case 2: $q<9$.}
The remaining prime powers are
$q\in\{2,3,4,5,7,8\}$.
Since $S\subseteq\mathbb P_F^{(1)}$, the Hasse--Weil bound
\cite[Theorem~5.2.3]{MR2464941} gives
\[
    n\leq N_F
      \leq\left\lfloor q+1+4\sqrt q\right\rfloor.
\]
For these field sizes, we have
\[
\begin{array}{c@{\qquad}c@{\qquad}c}
    q
    & \left\lfloor q+1+4\sqrt q\right\rfloor
    & \left\lceil B(q)+2\right\rceil-1\\[2pt]
    \hline
    2 & 8  & 11\\
    3 & 10 & 13\\
    4 & 13 & 14\\
    5 & 14 & 16\\
    7 & 18 & 18\\
    8 & 20 & 20
\end{array}
\]
It follows that
\[
    n\leq\left\lceil B(q)+2\right\rceil-1
      <B(q)+2.
\]
Thus \eqref{eq:main-length-bound} holds in every remaining
case. The integer bound \eqref{eq:main-integer-bound}
follows from the integrality of $n$.
\end{proof}

\begin{remark}
The bound in \eqref{eq:main-length-bound} is mainly useful
for large field sizes $q$. For prime powers $q<125$, its
right-hand side exceeds $q+1$, so it gives no improvement
over this benchmark. For $q\geq125$, however, the integer
bound in \eqref{eq:main-integer-bound} yields $n\leq q$.
Moreover, as $q\to\infty$, the theorem gives
\[
    n\leq \frac{2}{3}q+O(\sqrt q),
\]
which improves the leading coefficient from $1$ to $2/3$.
\end{remark}

\section{Proof of Theorem~\ref{thm:uniform-value-set}}
\label{sec:uniform-value-set}

This section proves Theorem~\ref{thm:uniform-value-set}.
Throughout, $F/\Fq$ has genus $2$ and full constant
field $\Fq$, and we write
$\overline F=F\overline{\mathbb F}_q$.

For each divisor $A$ of degree $3$, let $t=v/u$, where
$\{u,v\}$ is any $\Fq$-basis of $\mathcal L(A)$.
Define
\[
    \widetilde{\mathcal E}_F
    =
    \left\{
        [A]:
        \begin{array}{l}
            \deg A=3,\quad [F:\Fq(t)]=3,\\
            \overline F/\overline{\mathbb F}_q(t)
            \text{ is not Galois}
        \end{array}
    \right\},
\]
and
\[
    \mathcal Q_F
    =
    \left\{
        [A]:\deg A=3,\quad [F:\Fq(t)]=2
    \right\}.
\]
Both sets are well defined, since the generated subfield
$\Fq(t)$ depends only on $[A]$, as shown in the proof of
Lemma~\ref{lem:value-set-invariance}.

We first study the functions associated with classes in
$\widetilde{\mathcal E}_F$.
We then bound the number of exceptional classes in
$\mathcal E_F$ and complete the proof by treating
the classes in $\mathcal Q_F$.
\subsection{The Normal Closure and Its Constant Field}
We write $C_m$ for a cyclic group of order $m$, $S_m$ for the
symmetric group on $m$ letters, and $A_m$ for its alternating subgroup.
Let $M$ be the normal closure of $F/\Fq(t)$.
We first determine $\operatorname{Gal}(M/\Fq(t))$
and the full constant field of $M$. These facts will be used in
the comparison of rational places in its intermediate fields.
\begin{lemma}
\label{lem:geometric-s3-closure}
Let $A$ be a divisor with $[A]\in\widetilde{\mathcal E}_F$,
and let $\{u,v\}$ be any $\Fq$-basis of $\mathcal L(A)$.
Put $t=v/u$ and $K=\Fq(t)$, and let $M$ be the normal
closure of $F/K$. Then
\[
    \operatorname{Gal}(M/K)\simeq S_3,
\]
and the full constant field of $M$ is $\Fq$.
In particular, if $E=M^{A_3}$,
then $E$ also has full constant field $\Fq$  and
\[
    [E:K]=2,\qquad [M:E]=3,\qquad [M:F]=2.
\]
\end{lemma}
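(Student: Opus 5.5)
The plan is to use that $F/K$ is separable of degree $3$ (Lemma~\ref{lem:degree-three-quotient}), so its normal closure $M$ has Galois group a transitive subgroup of $S_3$, namely $A_3\simeq C_3$ or $S_3$. We first rule out $A_3$ over $\Fq$. If $\operatorname{Gal}(M/K)$ were $C_3$, then $M=F$ and $F/K$ would be Galois. Constant field extension preserves the Galois property, since $\overline F=F\overline{\mathbb F}_q$ and $\overline{\mathbb F}_q(t)=K\overline{\mathbb F}_q$. Because $F$ has full constant field $\Fq$, we have $[\overline F:\overline{\mathbb F}_q(t)]=[F:K]=3$, and the automorphisms of $F/K$ extend. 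Hence $\overline F/\overline{\mathbb F}_q(t)$ would be Galois, contradicting $[A]\in\widetilde{\mathcal E}_F$. So $\operatorname{Gal}(M/K)\simeq S_3$, $[M:K]=6$, and $[M:F]=2$.

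Next we show that $M$ has full constant field $\Fq$. Let $\mathbb F_{q^m}$ be the full constant field of $M$. Then $K\mathbb F_{q^m}=\mathbb F_{q^m}(t)\subseteq M$ is Galois over $K$ with cyclic group of order $m$. This is a subextension of the $S_3$-extension $M/K$ that is Galois and abelian over $K$, so $m\mid 6$. The only normal subgroups of $S_3$ with abelian quotient are $A_3$ and $S_3$, which forces $m\in\{1,2\}$.

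Suppose $m=2$. Then $\mathbb F_{q^2}(t)$ is the unique quadratic subextension $E=M^{A_3}$ of $M/K$, and $M/\mathbb F_{q^2}(t)$ is cyclic of degree $3$. On the other hand, $F\mathbb F_{q^2}\subseteq M$ has degree $6$ over $K$, because $F\cap\mathbb F_{q^2}(t)=K$ (the full constant field of $F$ is $\Fq$). Thus $M=F\mathbb F_{q^2}$, and $M$ is a constant field extension of $F$. Then $M\overline{\mathbb F}_q=\overline F$, and the extension $\overline F/\overline{\mathbb F}_q(t)=M\overline{\mathbb F}_q/\mathbb F_{q^2}(t)\overline{\mathbb F}_q$ is Galois, as a base change of the Galois extension $M/\mathbb F_{q^2}(t)$. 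This again contradicts $[A]\in\widetilde{\mathcal E}_F$. Therefore $m=1$.

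The remaining claims follow directly. Since $E\subseteq M$, its constant field is contained in that of $M$, which is $\Fq$. The degrees are $[E:K]=[S_3:A_3]=2$ and $[M:E]=|A_3|=3$.

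The main obstacle is the $m=2$ case. The key point is to identify $M$ with $F\mathbb F_{q^2}$, which uses that $F$ is linearly disjoint from $\mathbb F_{q^2}(t)$ over $K$, and then to transfer the Galois property to the geometric extension. The rest is standard Galois bookkeeping.
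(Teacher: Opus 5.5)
Your proof is correct, but it takes a different route from the paper's.

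\textbf{The paper's route.} The paper first notes that the cubic $f$ stays irreducible over $\overline{\mathbb F}_q(t)$. Hence $\overline M=M\overline{\mathbb F}_q$ is its splitting field there, and $[\overline M:\overline{\mathbb F}_q(t)]=6$ because $\overline F/\overline{\mathbb F}_q(t)$ is not Galois. The constant-field-extension degree formula then gives $[M:K]=m\,[\overline M:\overline{\mathbb F}_q(t)]=6m\leq 6$. This single inequality yields $m=1$ and $\operatorname{Gal}(M/K)\simeq S_3$ simultaneously.

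\textbf{Your route.} You proceed by Galois-theoretic case analysis in three steps.
\begin{enumerate}
\item You exclude $C_3$ by carrying the Galois property of $F/K$ up to $\overline F/\overline{\mathbb F}_q(t)$.
\item You bound $m$ by observing that $\mathbb F_{q^m}(t)/K$ is an abelian Galois subextension of an $S_3$-extension, so $m\in\{1,2\}$.
\item You kill $m=2$ by identifying $M=F\mathbb F_{q^2}$. This uses degree counting, namely $[F\mathbb F_{q^2}:F]=2$ because $F$ has full constant field $\Fq$. You then base-change the cyclic cubic extension $M/\mathbb F_{q^2}(t)$ to get $\overline F/\overline{\mathbb F}_q(t)$ Galois, a contradiction.
\end{enumerate}
All steps check out.

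\textbf{Comparison.} The paper's argument is shorter and handles the group and the constant field in one stroke. It needs the degree formula for constant field extensions of $M$ and the identification of $\overline M$ as a splitting field over $\overline{\mathbb F}_q(t)$.

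Yours avoids those, using only elementary Galois theory. It also makes explicit the only possible obstruction: $M$ would have to be the constant field extension $F\mathbb F_{q^2}$, i.e.\ $F/K$ would be non-Galois but geometrically cyclic. This is exactly what happens in the paper's closing remark with $q\equiv2\pmod 3$. There the normal closure of $K(z)$, $z^3=h(t)$, is $F\mathbb F_{q^2}$, obtained by adjoining a primitive cube root of unity.
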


\begin{proof}
The extension $F/K$ is separable. Indeed, if it were
inseparable, then the characteristic would be $3$ and
$t=z^3$ for some $z\in F$, by
\cite[Proposition~3.10.2(d)]{MR2464941}.
It would follow that
\[
    3=\deg(t)_\infty=3\deg(z)_\infty,
\]
which would make $F=\Fq(z)$ rational.

Choose a primitive element for $F/K$ and let $f(X)$ be its
minimal polynomial. The field $M$ is the splitting field of
the separable irreducible cubic polynomial $f(X)$.

Since the three roots of $f(X)$ generate $M$ over $K$, any
$K$-automorphism of $M$ fixing all three roots is the identity.
Moreover, for any two roots $\alpha$ and $\beta$, the
irreducibility of $f(X)$ gives a $K$-isomorphism
$K(\alpha)\to K(\beta)$ sending $\alpha$ to $\beta$.
Since $M/K$ is normal, this isomorphism extends to a
$K$-automorphism of $M$.

Consequently, $\operatorname{Gal}(M/K)$ acts faithfully and
transitively on the three roots.
Faithfulness identifies it with a subgroup of $S_3$, so its
order divides $6$.
Transitivity and the orbit--stabilizer theorem imply that
its order is divisible by $3$.
Thus its order is either $3$ or $6$.
A group of order $3$ is cyclic, whereas a subgroup of $S_3$
of order $6$ is $S_3$ itself.
Therefore, $\operatorname{Gal}(M/K)$ is isomorphic to
either $C_3$ or $S_3$.

Set $\overline K=\overline{\mathbb F}_q(t)$ and
$\overline M=M\overline{\mathbb F}_q$.
 By
\cite[Proposition~3.6.1(c)]{MR2464941}, constant field extension preserves the degree
$[\overline F:\overline K]=3$.
Thus $f(X)$ remains irreducible over $\overline K$, and
$\overline M$ is its splitting field over $\overline K$.
Since $\overline F/\overline K$ is not Galois, its normal
closure has Galois group $S_3$. Hence
\[
    [\overline M:\overline K]=6.
\]
Let $\mathbb F_{q^m}$ denote the full constant field of $M$.
By \cite[Proposition~3.6.6]{MR2464941}, we have
\[
    [M:K]=[\mathbb F_{q^m}:\Fq][\overline M:\overline K]=6m.
\]
On the other hand, $[M:K]\leq6$. Therefore $m=1$ and
$\operatorname{Gal}(M/K)\simeq S_3$.
The remaining assertions follow from the Galois correspondence
and $\Fq\subseteq E\subseteq M$.
\end{proof}

Lemma~\ref{lem:geometric-s3-closure} assumes that
$F/\Fq(t)$ remains non-Galois after extending the constant
field to $\overline{\Fq}$.
This assumption is essential for the conclusion that
$\Fq$ is the full constant field of $M$. A non-Galois cubic extension over
$\Fq(t)$ can become Galois after extending constants;
an example is given at the end of this section.

\subsection{Different Divisors and a Genus Bound}

We retain the fields $K\subseteq F\subseteq M$ and $E$ from
Lemma~\ref{lem:geometric-s3-closure}.
For a finite separable extension $L'/L$ of function fields,
the \emph{different} is the effective divisor
\[
    \operatorname{Diff}(L'/L)
       =\sum_{P\in\mathbb P_L}
          \sum_{P'\mid P} d(P'\mid P)\,P',
\]
where $d(P'\mid P)$ denotes the different exponent of
$P'$ over $P$, as defined in
\cite[Definition~3.4.3]{MR2464941}.
In particular, $\operatorname{Diff}(L'/L)$ is a divisor
of the upper field $L'$.
All different divisors below are understood in this sense.

\begin{proposition}
\label{prop:different-identity}
One has
\begin{equation}
    \deg\operatorname{Diff}(F/K)
    =\deg\operatorname{Diff}(E/K)
     +\frac12\deg\operatorname{Diff}(M/E).
    \label{eq:different-comparison-identity}
\end{equation}
Consequently,
\begin{equation}
    \deg\operatorname{Diff}(E/K)\leq8,
    \qquad g(E)\leq3.
    \label{eq:resolvent-genus-bound}
\end{equation}
These assertions hold in every characteristic.
\end{proposition}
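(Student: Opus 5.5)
We compute the different of the tower $M/K$ in two ways, using transitivity of the different along $K\subseteq F\subseteq M$ and along $K\subseteq E\subseteq M$. Transitivity \cite[Corollary~3.4.12]{MR2464941} gives
\[
    \operatorname{Diff}(M/K)
    =\operatorname{Con}_{M/F}\bigl(\operatorname{Diff}(F/K)\bigr)+\operatorname{Diff}(M/F)
    =\operatorname{Con}_{M/E}\bigl(\operatorname{Diff}(E/K)\bigr)+\operatorname{Diff}(M/E).
\]
Conorm multiplies degrees by the extension degree. Taking degrees and using $[M:F]=2$ and $[M:E]=3$ from Lemma~\ref{lem:geometric-s3-closure} yields
\[
    2\deg\operatorname{Diff}(F/K)+\deg\operatorname{Diff}(M/F)
    =3\deg\operatorname{Diff}(E/K)+\deg\operatorname{Diff}(M/E).
\]
Since all fields have full constant field $\Fq$, degrees can be computed after extending constants to $\overline{\mathbb F}_q$. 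The extension to $\overline{\mathbb F}_q$ preserves differents and degrees.

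The key step is to relate $\deg\operatorname{Diff}(M/F)$ to the other terms. Over $\overline{\mathbb F}_q$, the group $S_3$ contains three conjugate subgroups of order $2$, namely $\operatorname{Gal}(M/F_i)$ with $F=F_1$, and the normal subgroup $A_3=\operatorname{Gal}(M/E)$. I would use the Hilbert different formula \cite[Theorem~3.8.7]{MR2464941} place by place over $M$. This gives $d(P'\mid P)=\sum_{i\ge0}(|G_i(P')|-1)$, with the higher ramification groups taken in $\operatorname{Gal}(M/K)$. It also gives the corresponding formula for subgroups $U$, namely $d_{M/M^U}(P')=\sum_i(|G_i\cap U|-1)$. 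Summing over a conjugacy class, each nonidentity element $\tau\in G_i$ contributes once for every subgroup containing it. This yields the group-ring identity
\[
    1+\sum_{U}\,(\text{the order-2 subgroups})\;\text{versus}\;A_3.
\]
Explicitly, $\operatorname{Diff}(M/K)=\sum_{j=1}^3\operatorname{Diff}(M/F_j)+\operatorname{Diff}(M/E)$ as divisors on $M$. Indeed, the nonidentity elements of $S_3$ are partitioned into the three involutions and the two $3$-cycles, so $|G_i|-1=\sum_j(|G_i\cap U_j|-1)+(|G_i\cap A_3|-1)$. The conjugates $F_j$ are isomorphic over $K$ via elements of $S_3$, so the three degrees $\deg\operatorname{Diff}(M/F_j)$ coincide. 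Hence
\[
    \deg\operatorname{Diff}(M/K)=3\deg\operatorname{Diff}(M/F)+\deg\operatorname{Diff}(M/E).
\]

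Combining this with $\deg\operatorname{Diff}(M/K)=2\deg\operatorname{Diff}(F/K)+\deg\operatorname{Diff}(M/F)$ gives $2\deg\operatorname{Diff}(M/F)=2\deg\operatorname{Diff}(F/K)-\deg\operatorname{Diff}(M/E)$. Substituting into the $E$-side equation then gives $6\deg\operatorname{Diff}(F/K)-3\deg\operatorname{Diff}(M/E)/2\cdot$ (after simplification) the identity \eqref{eq:different-comparison-identity}. I would carry out the linear algebra carefully at this point.

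For the consequences, Hurwitz applied to $F/K$ gives $\deg\operatorname{Diff}(F/K)=2g(F)-2+3\cdot2=8$. Since $\operatorname{Diff}(M/E)\ge0$, it follows that $\deg\operatorname{Diff}(E/K)\le8$. Hurwitz applied to $E/K$, which has degree $2$ and the same constant field, gives $2g(E)-2=-4+\deg\operatorname{Diff}(E/K)\le4$, so $g(E)\le3$.

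The main obstacle is justifying the subgroup-sum identity in characteristics $2$ and $3$, where ramification is wild. Hilbert's formula handles this uniformly, because it only requires counting, for each $i$, how the nonidentity elements of $G_i$ split among the subgroups. This count is a pure group-theoretic partition of $S_3$. A second point to check is that all decomposition data may be computed in $\overline M/\overline K$, which is legitimate because the constant field extension is unramified and preserves different degrees.
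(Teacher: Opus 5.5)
Your proposal is correct and follows essentially the same route as the paper: the partition of the nonidentity elements of $S_3$ into the three order-two subgroups and $A_3$ feeds Hilbert's different formula and gives $\deg\operatorname{Diff}(M/K)=3a+b$ with $a=\deg\operatorname{Diff}(M/F)$ and $b=\deg\operatorname{Diff}(M/E)$. Comparing this with the two transitivity equations and then applying Hurwitz finishes the argument exactly as you describe. Two small points for the write-up: the passage to $\overline{\mathbb F}_q$ is unnecessary, since Lemma~\ref{lem:geometric-s3-closure} already gives $\operatorname{Gal}(M/K)\simeq S_3$ with full constant field $\Fq$. Your garbled final algebra should simply read as follows: equating $3a+b$ with $3\deg\operatorname{Diff}(E/K)+b$ gives $a=\deg\operatorname{Diff}(E/K)$, and then $2a+b=2\deg\operatorname{Diff}(F/K)$.
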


\begin{proof}
Identify $G=\operatorname{Gal}(M/K)$ with $S_3$, and let
\[
    H_1=\langle(12)\rangle,\qquad
    H_2=\langle(13)\rangle,\qquad
    H_3=\langle(23)\rangle.
\]
Put $F_j=M^{H_j}$, choosing the identification so that $F_1=F$.

Fix a place $R$ of $M$, and write $P=R\cap K$. Let $\mathcal O_R=\{z\in M:v_R(z)\geq0\}$ be its valuation ring.
For $i\geq0$, let $G_i(R|P)$ be the $i$-th ramification
group of $R$ over $P$, namely
\[
    G_i(R|P)=
    \{\tau\in G:
      v_R(\tau(z)-z)\geq i+1
      \text{ for every }z\in\mathcal O_R\}.
\]
For a subgroup $U\leq G$, the corresponding $i$-th
ramification group for $M/M^U$ is $G_i(R|P)\cap U$.
This follows directly from the definition, since the
valuation and the valuation ring remain unchanged;
see \cite[Definition~3.8.4]{MR2464941}.

Every nonidentity element of $S_3$ belongs to exactly one of
$H_1,H_2,H_3,A_3$. Thus, for each $i\geq0$,
\begin{equation}
    |G_i(R|P)|-1
    =\sum_{j=1}^3
       \bigl(|G_i(R|P)\cap H_j|-1\bigr)
     +|G_i(R|P)\cap A_3|-1.
    \label{eq:ramification-group-partition}
\end{equation}
Summing over $i\geq0$ and applying Hilbert's different formula
\cite[Theorem~3.8.7]{MR2464941} yields
\[
    d(R\mid P)
    =\sum_{j=1}^3 d(R\mid R\cap F_j)
     +d(R\mid R\cap E).
\]
Since this holds at every place $R$, we obtain an identity
of divisors of $M$:
\begin{equation}
    \operatorname{Diff}(M/K)
    =\sum_{j=1}^3\operatorname{Diff}(M/F_j)
     +\operatorname{Diff}(M/E).
    \label{eq:different-divisor-partition}
\end{equation}

The subgroups $H_1,H_2,H_3$ are conjugate in $S_3$.
Hence, for every $i$ and $j$, there exists $\tau\in G$ such
that $\tau H_i\tau^{-1}=H_j$. By Galois correspondence,
\[
    \tau(F_i)=\tau(M^{H_i})=M^{H_j}=F_j.
\]
Thus the fields $F_1,F_2,F_3$ are mutually $K$-conjugate.

If $\tau(F_i)=F_j$ for $\tau\in G$, then
\[
    \tau\bigl(\operatorname{Diff}(M/F_i)\bigr)
    =\operatorname{Diff}(M/F_j).
\]
Indeed, conjugation identifies the corresponding
ramification groups and hence preserves the different
exponents. It also preserves place degrees, because $\tau$
fixes $\Fq$ and induces isomorphisms of residue class fields.
Therefore the three different divisors have the same degree.
Writing
\[
    a=\deg\operatorname{Diff}(M/F),\qquad
    b=\deg\operatorname{Diff}(M/E),
\]
we deduce from \eqref{eq:different-divisor-partition} that
\begin{equation}
    \deg\operatorname{Diff}(M/K)=3a+b.
    \label{eq:total-different-degree}
\end{equation}

Apply transitivity of the different
\cite[Corollary~3.4.12]{MR2464941} to the two towers
$K\subseteq E\subseteq M$ and $K\subseteq F\subseteq M$.
All fields have the same full constant field, so the degrees
of conorms are multiplied by the corresponding extension
degrees. We obtain
\begin{align}
    \deg\operatorname{Diff}(M/K)
      &=b+3\deg\operatorname{Diff}(E/K),
        \label{eq:different-tower-E}\\
    \deg\operatorname{Diff}(M/K)
      &=a+2\deg\operatorname{Diff}(F/K).
        \label{eq:different-tower-F}
\end{align}
Comparing \eqref{eq:total-different-degree} with
\eqref{eq:different-tower-E} gives
$a=\deg\operatorname{Diff}(E/K)$.
Comparison with \eqref{eq:different-tower-F} now gives
\eqref{eq:different-comparison-identity}.

By the Hurwitz genus formula
\cite[Theorem~3.4.13]{MR2464941},
\begin{equation}
    \deg\operatorname{Diff}(F/K)
    =2g(F)-2-3(2g(K)-2)=8.
    \label{eq:cubic-different-eight}
\end{equation}
Since $\operatorname{Diff}(M/E)$ is effective, we obtain
$\deg\operatorname{Diff}(E/K)\leq8$.
Applying Hurwitz once more, to $E/K$, gives
\[
    2g(E)-2=-4+\deg\operatorname{Diff}(E/K)\leq4.
\]
Hence $g(E)\leq3$.
\end{proof}

\subsection{Counting the Values of a Degree-Three Function
at Rational Places}
The rational places of $K=\Fq(t)$ are naturally indexed by
the elements of
\[
    \mathbb P^1(\Fq)=\Fq\cup\{\infty\}.
\]
For $\alpha\in\Fq$, let $\mathfrak p_\alpha$ denote the
unique zero of $t-\alpha$ in $K$, and let
$\mathfrak p_\infty$ denote the unique pole of $t$.
Thus each $\alpha\in\Fq\cup\{\infty\}$ corresponds to a
unique rational place $\mathfrak p_\alpha$ of $K$.

For every rational place $\mathfrak p$ of $K$, define
\[
    a_{\mathfrak p}
       =\bigl|\{P\in\mathbb P_F^{(1)}
                    :P\mid\mathfrak p\}\bigr|,
    \qquad
    b_{\mathfrak p}
       =\bigl|\{Q\in\mathbb P_E^{(1)}
                    :Q\mid\mathfrak p\}\bigr|.
\]
We claim that, for every
$\alpha\in\Fq\cup\{\infty\}$,
\begin{equation}
    \alpha\in V_F(t)
    \quad\Longleftrightarrow\quad
    a_{\mathfrak p_\alpha}>0.
    \label{eq:value-place-correspondence}
\end{equation}
Indeed, if $a_{\mathfrak p_\alpha}>0$, there is a rational
place $P$ of $F$ lying above $\mathfrak p_\alpha$.
If $\alpha\in\Fq$, then
\[
    v_P(t-\alpha)
       =e(P\mid\mathfrak p_\alpha)
          v_{\mathfrak p_\alpha}(t-\alpha)>0,
\]
so $t(P)=\alpha$. If $\alpha=\infty$, then
\[
    v_P(t)
       =e(P\mid\mathfrak p_\infty)
          v_{\mathfrak p_\infty}(t)<0,
\]
so $P$ is a pole of $t$ and $t(P)=\infty$.
Thus $\alpha\in V_F(t)$.

Conversely, suppose that $\alpha\in V_F(t)$. Then there is a
rational place $P$ of $F$ such that $t(P)=\alpha$.
If $\alpha\in\Fq$, then $v_P(t-\alpha)>0$, so the restriction
of $P$ to $K$ is $\mathfrak p_\alpha$. If
$\alpha=\infty$, then $v_P(t)<0$, so the restriction of $P$
to $K$ is $\mathfrak p_\infty$. Hence $P$ is counted by
$a_{\mathfrak p_\alpha}$, and therefore
$a_{\mathfrak p_\alpha}>0$.


The following lemma expresses $|V_F(t)|$ in terms of the
numbers of rational places of $F$ and $E$, with a bounded
correction arising from total ramification in $F/K$.
\begin{lemma}
\label{lem:cubic-value-count}
Let $A$ be a divisor with $[A]\in\widetilde{\mathcal E}_F$,
and retain the notation $t$, $K$, $M$, and $E$ from
Lemma~\ref{lem:geometric-s3-closure}.
Write
\[
    N_F=|\mathbb P_F^{(1)}|,
    \qquad N_E=|\mathbb P_E^{(1)}|.
\]
Let $\mathcal T$ be the set of rational places of $K$
that are totally ramified in $F/K$, and define
\[
    \varepsilon
    =
    \sum_{\mathfrak p\in\mathcal T}
    |\{Q\in\mathbb P_E^{(1)}:Q\mid\mathfrak p\}|.
\]
Then $0\leq\varepsilon\leq8$ and
\begin{equation}
    3|V_F(t)|=N_F-N_E+2(q+1)+\varepsilon.
    \label{eq:cubic-value-count}
\end{equation}
\end{lemma}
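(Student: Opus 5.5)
The plan is to classify each rational place $\mathfrak p$ of $K$ by its splitting type in $F/K$ and to record, for each type, the pair $(a_{\mathfrak p},b_{\mathfrak p})$. Summing over all $q+1$ rational places of $K$ then gives the identity. Two global counts enter: since every rational place of $F$ or $E$ lies over a rational place of $K$,
\[
    N_F=\sum_{\mathfrak p}a_{\mathfrak p},\qquad N_E=\sum_{\mathfrak p}b_{\mathfrak p},
\]
while \eqref{eq:value-place-correspondence} gives $|V_F(t)|=|\{\mathfrak p:a_{\mathfrak p}>0\}|$. It therefore suffices to prove, for every rational $\mathfrak p$,
\[
    3\cdot\mathbf 1[a_{\mathfrak p}>0]=a_{\mathfrak p}-b_{\mathfrak p}+2+\varepsilon_{\mathfrak p},
\]
where $\varepsilon_{\mathfrak p}=b_{\mathfrak p}$ if $\mathfrak p\in\mathcal T$ and $\varepsilon_{\mathfrak p}=0$ otherwise.

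The local analysis would use the $S_3$-structure from Lemma~\ref{lem:geometric-s3-closure}. Let $D_R\le S_3$ be the decomposition group of a place $R$ of $M$ over $\mathfrak p$, and $I_R\le D_R$ its inertia group. Then $D_R/I_R$ is cyclic, and it is generated by Frobenius because all residue fields are finite. The places of $F=M^{H_1}$ over $\mathfrak p$ correspond to the double cosets $H_1\backslash S_3/D_R$. The rational ones are those with residue degree $1$, that is, those whose $D_R$-orbit on $S_3/H_1$ (the three roots) has size equal to its $I_R$-orbit size. The analogous description holds for $E=M^{A_3}$, acting on the two cosets of $A_3$, with $E$ of genus at most $3$ by Proposition~\ref{prop:different-identity}. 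I would then tabulate the cases, with $b$ read off from the action on $S_3/A_3$.

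Unramified places:
\begin{itemize}
\item $D=1$: $a=3$, $b=2$, and $3=3-2+2$.
\item $D$ a transposition: $a=1$, $b=0$, and $3=1-0+2$.
\item $D=A_3$: $a=0$, $b=2$, and $0=0-2+2$.
\end{itemize}

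Ramified places:
\begin{itemize}
\item $I$ a transposition and $D=I$: the type is $P_1^2P_2$ with $a=2$, and $E$ is ramified with $b=1$, so $3=2-1+2$.
\item $I=A_3$: $F/K$ is totally ramified, $E/K$ is unramified, $a=1$, and $b\in\{0,2\}$. The relation $3=1-b+2+b$ holds, and this is exactly where $\varepsilon$ enters.
\item $I=S_3$ (possible wildly): $F/K$ is totally ramified, $a=1$, $b=1$, and $3=1-1+2+1$.
\end{itemize}

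The transposition case $D=I$ is the only possibility for a transposition inertia, since $D$ normalizes $I$. The case $D=S_3$ with $I=A_3$ gives $b=0$, and $D=I=A_3$ gives $b=2$; both are already covered. Summing the local identity gives \eqref{eq:cubic-value-count}.

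For the bound $0\le\varepsilon\le 8$, note that $\varepsilon\ge 0$ is trivial. For the upper bound, each $\mathfrak p\in\mathcal T$ contributes at most $2$ to $\varepsilon$ and at least $2$ to $\deg\operatorname{Diff}(F/K)=8$, by Dedekind's different theorem, since $e=3$ forces $d\ge 2$. So there are at most $4$ such places, and $\varepsilon\le 8$.

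The main obstacle I expect is making the local case analysis rigorous in characteristics $2$ and $3$, where ramification can be wild. In particular I would need to check carefully, via the double-coset description, that the count of rational places above $\mathfrak p$ depends only on $(D_R,I_R)$ as claimed.
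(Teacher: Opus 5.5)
Your proposal is correct and is essentially the paper's proof: you use the same local identity $3\cdot\mathbf 1_{\{a_{\mathfrak p}>0\}}=a_{\mathfrak p}-b_{\mathfrak p}+2+\varepsilon_{\mathfrak p}$ summed over the $q+1$ rational places of $K$, and the same bound $d(P\mid\mathfrak p)\geq e(P\mid\mathfrak p)-1=2$ against $\deg\operatorname{Diff}(F/K)=8$, giving $|\mathcal T|\leq4$ and $\varepsilon\leq8$. The only difference is bookkeeping: you enumerate the admissible pairs $(D_R,I_R)$ in $S_3$ and read off $a_{\mathfrak p}$ and $b_{\mathfrak p}$ by orbit counting, whereas the paper enumerates splitting types in $F/K$, deduces $b_{\mathfrak p}$ from $|G_Z|/|G_Z\cap A_3|$ and $|G_T|/|G_T\cap A_3|$, and never needs $b_{\mathfrak p}$ in the totally ramified case. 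Your worry about characteristics $2$ and $3$ is unnecessary, since $|I_R|=e$, $|D_R|=ef$ and the cyclicity of $D_R/I_R$ hold under wild ramification as well.
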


\begin{proof}
Recall that, for each rational place $\mathfrak p$ of $K$,
\[
    a_{\mathfrak p}
       =|\{P\in\mathbb P_F^{(1)}:P\mid\mathfrak p\}|,
    \qquad
    b_{\mathfrak p}
       =|\{Q\in\mathbb P_E^{(1)}:Q\mid\mathfrak p\}|
\]
count the rational places of $F$ and $E$ above
$\mathfrak p$, respectively. Write $\mathbf 1_{\{a_{\mathfrak p}>0\}}$ for the indicator that
equals one when $a_{\mathfrak p}>0$ and zero otherwise.

Since all constant fields are $\mathbb F_q$ and
$\deg\mathfrak p=1$, every place above $\mathfrak p$
has degree equal to its relative degree over $\mathfrak p$.
By \eqref{eq:value-place-correspondence},
\[
    \alpha\in V_F(t)
    \quad\Longleftrightarrow\quad
    a_{\mathfrak p_\alpha}>0.
\]

Define the local correction term by
\[
    \varepsilon_{\mathfrak p}
       =3\mathbf 1_{\{a_{\mathfrak p}>0\}}
        -a_{\mathfrak p}+b_{\mathfrak p}-2.
\]
Thus
\begin{equation}
    3\mathbf 1_{\{a_{\mathfrak p}>0\}}
       =a_{\mathfrak p}-b_{\mathfrak p}+2
        +\varepsilon_{\mathfrak p}.
    \label{eq:local-value-count}
\end{equation}
We will prove that $\varepsilon_{\mathfrak p}=0$ unless
$\mathfrak p\in\mathcal T$, in which case
$\varepsilon_{\mathfrak p}=b_{\mathfrak p}$.

The local possibilities are summarized in the following table.
For an unramified place, the first column lists the residue degrees;
in the ramified rows, it lists the conorm in $F$.
\begin{center}
\begin{tabular}{lccc}
\toprule
Type over $\mathfrak p$ & $a_{\mathfrak p}$ & $b_{\mathfrak p}$
& $\varepsilon_{\mathfrak p}$\\
\midrule
Unramified: $1+1+1$ & $3$ & $2$ & $0$\\
Unramified: $1+2$ & $1$ & $0$ & $0$\\
Unramified: $3$ & $0$ & $2$ & $0$\\
Ramified: $2P_1+P_2$ & $2$ & $1$ & $0$\\
Totally ramified: $3P$ & $1$ & $0,1,$ or $2$ & $b_{\mathfrak p}$\\
\bottomrule
\end{tabular}
\end{center}
We justify the table in each case, including wild ramification.

\smallskip
\noindent
\textit{Case 1: $\mathfrak p$ is unramified in $F/K$.}
By \cite[Corollary~3.9.3(b)]{MR2464941}, it is also
unramified in the normal closure $M/K$, and hence in $E/K$.
For a place $R$ of $M$ above $\mathfrak p$, put
\[
    Q=R\cap E,
\]
and denote the decomposition group of $R$ over
$\mathfrak p$ by $G_Z(R\mid\mathfrak p)$.
The inertia group $G_T(R\mid\mathfrak p)$ is trivial since $\mathfrak p$ is unramified in $M/K$.
By \cite[Theorem~3.8.2]{MR2464941},
\[
    |G_Z(R\mid\mathfrak p)|=f(R\mid\mathfrak p),
\]
and $G_Z(R\mid\mathfrak p)$ is cyclic, being isomorphic
to the Galois group of an extension of finite residue class fields.
The cyclic subgroups of $S_3$ have orders $1$, $2$, or $3$;
in the last case $G_Z(R\mid\mathfrak p)=A_3$.

Since $\operatorname{Gal}(M/E)=A_3$, the decomposition
group of $R$ over $Q$ is
\[
    G_Z(R\mid Q)=G_Z(R\mid\mathfrak p)\cap A_3.
\]
Applying the same theorem to $M/E$ gives
\[
    f(R\mid Q)=|G_Z(R\mid\mathfrak p)\cap A_3|.
\]
The multiplicativity of relative degrees therefore yields
\[
    f(Q\mid\mathfrak p)
       =\frac{f(R\mid\mathfrak p)}{f(R\mid Q)}
       =\frac{|G_Z(R\mid\mathfrak p)|}
              {|G_Z(R\mid\mathfrak p)\cap A_3|}.
\]

The fundamental equality
\cite[Theorem~3.1.11]{MR2464941} shows that the residue
degrees in the unramified cubic extension $F/K$
have one of the following three types.

\begin{enumerate}
    \item If the type is $1+1+1$, then $\mathfrak p$
    splits completely in $F/K$. It also splits completely
    in $M/K$, by \cite[Lemma~3.9.5]{MR2464941}.
    Hence $G_Z(R\mid\mathfrak p)=\{1\}$ and
    $f(Q\mid\mathfrak p)=1$.
    As $E/K$ is an unramified quadratic extension at
    $\mathfrak p$, the place $\mathfrak p$ splits into
    two rational places of $E$.
    Thus $(a_{\mathfrak p},b_{\mathfrak p})=(3,2)$.

    \item If the type is $1+2$, let $P$ be the place of $F$
    with $f(P\mid\mathfrak p)=2$, and choose $R$ above $P$.
    Then
    \[
        |G_Z(R\mid\mathfrak p)|
           =f(R\mid\mathfrak p)
           =2f(R\mid P).
    \]
    Thus $2$ divides $|G_Z(R\mid\mathfrak p)|$, so
    $|G_Z(R\mid\mathfrak p)|=2$.
    Such a subgroup is generated by a transposition,
    and therefore
    \[
        G_Z(R\mid\mathfrak p)\cap A_3=\{1\}.
    \]
    Consequently $f(Q\mid\mathfrak p)=2$.
    Hence $\mathfrak p$ is inert in $E/K$, with one
    place of degree $2$ above it.
    Thus $(a_{\mathfrak p},b_{\mathfrak p})=(1,0)$.

    \item If the type is $3$, let $P$ be the unique place
    of $F$ above $\mathfrak p$, and choose $R$ above $P$. Let $Q=R \cap E$.
    Then
    \[
        |G_Z(R\mid\mathfrak p)|
           =f(R\mid\mathfrak p)
           =3f(R\mid P).
    \]
    Thus $|G_Z(R\mid\mathfrak p)|=3$ and
    $G_Z(R\mid\mathfrak p)=A_3$.
    Consequently $f(Q\mid\mathfrak p)=1$, so
    $\mathfrak p$ splits into two rational places of $E$.
    Thus $(a_{\mathfrak p},b_{\mathfrak p})=(0,2)$.
\end{enumerate}

In each of these three cases,
\[
    3\mathbf 1_{\{a_{\mathfrak p}>0\}}
       =a_{\mathfrak p}-b_{\mathfrak p}+2,
\]
and hence $\varepsilon_{\mathfrak p}=0$.

\smallskip
\noindent
\textit{Case 2: $\mathfrak p$ is ramified but not totally
ramified in $F/K$.}
Since $[F:K]=3$, the fundamental equality implies
\[
    \operatorname{Con}_{F/K}(\mathfrak p)=2P_1+P_2
\]
for some rational places $P_1$ and $P_2$ of $F$.
In particular, $a_{\mathfrak p}=2$.
Choose a place $R$ of $M$ above $P_1$, denote its inertia
group over $\mathfrak p$ by $G_T(R\mid\mathfrak p)$,
and put $Q=R\cap E$.
By multiplicativity of ramification indices,
\[
    e(R\mid\mathfrak p)=2e(R\mid P_1).
\]
Because $[M:F]=2$, the right-hand side is either $2$ or $4$.
On the other hand, $M/K$ is Galois of degree $6$, so
\[
    e(R\mid\mathfrak p)=|G_T(R\mid\mathfrak p)|
\]
divides $6$.
Therefore $|G_T(R\mid\mathfrak p)|=2$, and
\[
    G_T(R\mid\mathfrak p)\cap A_3=\{1\}.
\]

The inertia group of $R$ over $Q$ in $M/E$ is
\[
    G_T(R\mid Q)=G_T(R\mid\mathfrak p)\cap A_3.
\]
By \cite[Theorem~3.8.2]{MR2464941} and multiplicativity,
\[
    e(Q\mid\mathfrak p)
       =\frac{e(R\mid\mathfrak p)}{e(R\mid Q)}
       =\frac{|G_T(R\mid\mathfrak p)|}
              {|G_T(R\mid\mathfrak p)\cap A_3|}
       =2.
\]
Since $[E:K]=2$, this forces $Q$ to be the unique place
above $\mathfrak p$, with relative degree $1$.
Thus $b_{\mathfrak p}=1$ and
\[
    \varepsilon_{\mathfrak p}=3-2+1-2=0.
\]

\smallskip
\noindent
\textit{Case 3: $\mathfrak p$ is totally ramified in $F/K$.}
In this case,
\[
    \operatorname{Con}_{F/K}(\mathfrak p)=3P,
    \qquad \deg P=1,
\]
so $a_{\mathfrak p}=1$. Directly from the definition,
\[
    \varepsilon_{\mathfrak p}
       =3-1+b_{\mathfrak p}-2
       =b_{\mathfrak p}.
\]
The quadratic extension $E/K$ has at most two rational
places above $\mathfrak p$, whence
\[
    0\leq\varepsilon_{\mathfrak p}\leq2.
\]

These cases exhaust all possibilities. Hence
\[
    \sum_{\substack{\mathfrak p\in\mathbb P_K\\
                    \deg\mathfrak p=1}}
       \varepsilon_{\mathfrak p}
       =\sum_{\mathfrak p\in\mathcal T}b_{\mathfrak p}
       =\varepsilon.
\]
The Hurwitz genus formula
\cite[Theorem~3.4.13]{MR2464941} gives
\[
    \deg\operatorname{Diff}(F/K)
       =2g(F)-2-3(2g(K)-2)=8.
\]
For each $\mathfrak p\in\mathcal T$, its unique extension
$P$ in $F$ satisfies
\[
    d(P\mid\mathfrak p)\geq e(P\mid\mathfrak p)-1=2
\]
by \cite[Theorem~3.5.1]{MR2464941}.
Since $\deg P=1$, each such place contributes at least
$2$ to the degree of the different.
Distinct places in $\mathcal T$ give distinct places of $F$.
Therefore
\[
    2|\mathcal T|\leq8,
    \qquad
    0\leq\varepsilon\leq2|\mathcal T|\leq8.
\]
This argument applies to wild ramification as well.

Finally, sum \eqref{eq:local-value-count} over all $q+1$
rational places of $K$, including the infinite place.
Every rational place of $F$ or $E$ restricts to a rational
place of $K$, and each attained rational value of $t$
corresponds to exactly one such place. Thus
\[
    \sum_{\mathfrak p}a_{\mathfrak p}=N_F,
    \qquad
    \sum_{\mathfrak p}b_{\mathfrak p}=N_E,
    \qquad
    \sum_{\mathfrak p}\mathbf 1_{\{a_{\mathfrak p}>0\}}
       =|V_F(t)|.
\]
The resulting equality is
\[
    3|V_F(t)|=N_F-N_E+2(q+1)+\varepsilon,
\]
as claimed.
\end{proof}

\begin{corollary}
\label{cor:noncyclic-cubic-value-bound}
Let $A$ be a divisor with $[A]\in\widetilde{\mathcal E}_F$,
and let $t$ be as in Lemma~\ref{lem:geometric-s3-closure}.
Then
\begin{equation}
    |V_F(t)|<B(q).
    \label{eq:noncyclic-cubic-value-bound}
\end{equation}
\end{corollary}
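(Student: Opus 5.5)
The plan is to start from the identity \eqref{eq:cubic-value-count} of Lemma~\ref{lem:cubic-value-count},
\[
    3|V_F(t)|=N_F-N_E+2(q+1)+\varepsilon,
\]
and to bound each of the three variable terms on the right separately. For $N_F$ we use the Hasse--Weil upper bound \eqref{eq:genus-two-hasse-weil} with $g=2$, namely $N_F\leq q+1+4\sqrt q$. For $N_E$ we use the Hasse--Weil lower bound for $E$. By Lemma~\ref{lem:geometric-s3-closure}, $E$ has full constant field $\Fq$, and by Proposition~\ref{prop:different-identity}, $g(E)\leq3$. Hence $N_E\geq q+1-6\sqrt q$. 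Finally, $\varepsilon\leq8$ by Lemma~\ref{lem:cubic-value-count}. Substituting these three bounds gives
\[
    3|V_F(t)|\leq (q+1+4\sqrt q)-(q+1-6\sqrt q)+2(q+1)+8=2q+10\sqrt q+10,
\]
that is, $|V_F(t)|\leq B(q)$.

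To obtain the strict inequality \eqref{eq:noncyclic-cubic-value-bound}, we show that $B(q)$ is never an integer. This is the fact promised after Theorem~\ref{thm:main-length}. If $q$ is not a square, then $\sqrt q$ is irrational, so $B(q)$ is irrational. If $q=m^2$, then
\[
    3B(q)=2m^2+10m+10\equiv 2m^2+m+1 \pmod 3,
\]
and checking $m\equiv0,1,2\pmod 3$ gives residues $1,1,2$ respectively. So $3B(q)$ is never divisible by $3$, and $B(q)\notin\mathbb Z$. Since $|V_F(t)|$ is an integer satisfying $|V_F(t)|\leq B(q)$ and $B(q)$ is not an integer, the inequality is strict.

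The main point to watch is the applicability of the Hasse--Weil lower bound to $E$. It requires $\Fq$ to be the full constant field of $E$, which is exactly where the hypothesis that $\overline F/\overline{\mathbb F}_q(t)$ is non-Galois enters, via Lemma~\ref{lem:geometric-s3-closure}. It also requires the genus bound $g(E)\leq3$, which holds in every characteristic. If $g(E)$ were smaller, the bound would only improve, since $q+1-2g(E)\sqrt q\geq q+1-6\sqrt q$. Everything else is direct substitution.
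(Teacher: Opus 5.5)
Your proposal is correct and follows essentially the same route as the paper: you substitute the Hasse--Weil bounds $N_F\leq q+1+4\sqrt q$ and $N_E\geq q+1-6\sqrt q$ (justified by the full constant field of $E$ being $\Fq$ and $g(E)\leq3$), together with $\varepsilon\leq8$, into \eqref{eq:cubic-value-count}. You then obtain strict inequality from the non-integrality of $B(q)$, checked via the same residues of $3B(q)$ modulo $3$.
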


\begin{proof}
By Lemma~\ref{lem:geometric-s3-closure} and
Proposition~\ref{prop:different-identity}, the field $E$
has full constant field $\Fq$ and genus at most $3$.
The Hasse--Weil bound
\cite[Theorem~5.2.3]{MR2464941} gives
\[
    N_F\leq q+1+4\sqrt q,
    \qquad
    N_E\geq q+1-6\sqrt q.
\]
Substituting these estimates into
\eqref{eq:cubic-value-count} yields
\[
    3|V_F(t)|
      \leq(q+1+4\sqrt q)-(q+1-6\sqrt q)
          +2(q+1)+8
      =2q+10\sqrt q+10.
\]
Thus $|V_F(t)|\leq B(q)$.

To obtain strict inequality, observe that $B(q)$ is never
an integer. If $q$ is not a square, then $B(q)$ is irrational.
If $q=a^2$ with $a$ an integer, then
\[
    3B(q)=2a^2+10a+10
      \equiv
      \begin{cases}
          1\pmod3,&a\equiv0,1\pmod3,\\
          2\pmod3,&a\equiv2\pmod3.
      \end{cases}
\]
The integrality of $|V_F(t)|$ proves
\eqref{eq:noncyclic-cubic-value-bound}.
\end{proof}

\subsection{The Cardinality of the Exceptional Set $\mathcal E_F$}

Recall the exceptional set $\mathcal E_F$ defined before
Theorem~\ref{thm:uniform-value-set}.
All automorphisms of $\overline F$ considered in this
subsection fix $\overline{\mathbb F}_q$ pointwise.

It remains to bound the number of exceptional
classes. We first associate each such class with a subgroup
of order $3$ in the automorphism group $\operatorname{Aut}_{\overline{\mathbb F}_q}(\overline F)$.
\begin{lemma}
\label{lem:exceptional-class-subgroups}
Distinct classes in $\mathcal E_F$ determine distinct
subgroups of order $3$ of
$\operatorname{Aut}_{\overline{\mathbb F}_q}(\overline F)$.
\end{lemma}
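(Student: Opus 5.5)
Given $[A]\in\mathcal E_F$ with basis $u,v$ of $\mathcal L(A)$ and $t=v/u$, we attach the subgroup
\[
    \Gamma_{[A]}=\operatorname{Gal}\bigl(\overline F/\overline{\mathbb F}_q(t)\bigr)\leq\operatorname{Aut}_{\overline{\mathbb F}_q}(\overline F).
\]
By definition of $\mathcal E_F$, the extension $\overline F/\overline{\mathbb F}_q(t)$ is Galois. By \cite[Proposition~3.6.1(c)]{MR2464941} it has degree $3$, so $\Gamma_{[A]}$ has order $3$. By Lemma~\ref{lem:value-set-invariance}, the field $\Fq(t)$ depends only on $[A]$, and hence so does $\overline{\mathbb F}_q(t)$. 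Thus $\Gamma_{[A]}$ is well defined.

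It remains to show that $[A]\mapsto\Gamma_{[A]}$ is injective. Suppose $[A]$ and $[A']$ give the same subgroup $\Gamma$, with quotients $t$ and $t'$.

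First we get $\overline{\mathbb F}_q(t)=\overline F^{\Gamma}=\overline{\mathbb F}_q(t')$. Both $t$ and $t'$ generate this rational function field over $\overline{\mathbb F}_q$. Hence $t'=(\alpha t+\beta)/(\gamma t+\delta)$ with coefficients in $\overline{\mathbb F}_q$ and nonzero determinant.

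Next we descend the coefficients to $\Fq$. Both $t$ and $t'$ lie in $F$. Apply the Frobenius $\phi$ acting on constants of $\overline F=F\otimes\overline{\mathbb F}_q$; it fixes $F$ pointwise. The matrix $\phi(M)$ expresses the same $t'$ in terms of $t$. Since $t$ is transcendental over $\overline{\mathbb F}_q$, this Möbius expression is unique up to scalar, so $\phi(M)=\lambda M$. By Hilbert 90 for $\overline{\mathbb F}_q^\times$, or more simply by normalizing one nonzero entry to $1$, we may take $M$ with entries in $\Fq$. Hence $\Fq(t')=\Fq(t)$.

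Now we recover the class from the field. Write $L=\Fq(t)=\Fq(t')$. Then $(t)_\infty$ is a degree-$3$ effective divisor. We claim $[A]=[(t)_\infty]$. Indeed, by \eqref{eq:pole-divisor-bound}, $(t)_\infty\leq A+(u)$, and both have degree $3=[F:\Fq(t)]$, so $(t)_\infty=A+(u)\sim A$. Similarly $[A']=[(t')_\infty]$.

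Since $t'$ is a Möbius transform of $t$ over $\Fq$, $(t')_\infty$ is the zero divisor of $\gamma t+\delta$, that is, the fiber of $t$ over $-\delta/\gamma$ (or $(t)_\infty$ if $\gamma=0$). Any two fibers of $t$ over rational points are linearly equivalent, being divisors of the form $(t-c)_0$ and $(t)_\infty$ that differ by the principal divisor $(t-c)$. Therefore $[A']=[A]$.

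The main obstacle is the descent step. The definition of $\mathcal E_F$ only identifies the subgroup over $\overline{\mathbb F}_q$, while classes and fields live over $\Fq$. The argument needs the uniqueness-up-to-scalar of the Möbius relation to be checked carefully, so that Frobenius invariance forces an $\Fq$-rational representative.
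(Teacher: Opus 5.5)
Your proof is correct, and its skeleton matches the paper's. Both attach $\operatorname{Gal}(\overline F/\overline{\mathbb F}_q(t))$ to the class, both observe $(t)_\infty=A+(u)$, and both use equality of fixed fields to get a M\"obius relation $t'=(\alpha t+\beta)/(\gamma t+\delta)$.

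The two proofs differ in where the descent from $\overline{\mathbb F}_q$ to $\Fq$ happens. The paper leaves the coefficients in $\overline{\mathbb F}_q$ and concludes that $(t)_\infty$ and $(t')_\infty$ differ by the principal divisor $(\gamma t+\delta)$ of $\overline F$. It then descends the linear equivalence itself: because $\ell$ is invariant under constant field extension, $\ell(A-A')=1$ already in $F$, so $A-A'$ is principal.

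You instead descend the M\"obius matrix. You use the Frobenius in $\operatorname{Gal}(\overline F/F)$, which exists because $\Fq$ is the full constant field, together with faithfulness of the $\mathrm{PGL}_2$-action on generators of $\overline{\mathbb F}_q(t)$. This gives the stronger intermediate fact $\Fq(t')=\Fq(t)$, after which you compute the linear equivalence $(t')_\infty=(\gamma t+\delta)_0\sim(t)_\infty$ directly in $F$. Your normalization step is valid: scaling one nonzero entry to $1$ forces $\lambda=1$. The mention of Hilbert~90 is unnecessary.

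What each route buys: the paper's argument is shorter because it relies on a general descent theorem for linear equivalence. It never needs to show that the coefficients are $\Fq$-rational. Your argument is more hands-on and self-contained. It also shows that an exceptional class is already determined by the rational subfield $\Fq(t)\subseteq F$, not merely by its extension to $\overline{\mathbb F}_q$.
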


\begin{proof}
For $[A]\in\mathcal E_F$, choose a basis $u,v$ of
$\mathcal L(A)$ and put $t=v/u$.
Associate to $[A]$ the subgroup
\[
    T_A=\operatorname{Gal}
       (\overline F/\overline{\mathbb F}_q(t)).
\]
It has order $3$ and is independent of these choices by
Lemma~\ref{lem:value-set-invariance}.
Moreover, \eqref{eq:pole-divisor-bound} is an equality in
this case, since both divisors have degree $3$. Therefore
\begin{equation}
    (t)_\infty=A+(u),\qquad A\sim(t)_\infty.
    \label{eq:cubic-pole-class}
\end{equation}

Suppose that $T_A=T_{A'}$. We want to prove $[A]=[A']$. Let $t'$ be a corresponding
quotient for $A'$.
Since $T_A=T_{A'}$, their fixed fields in $\overline F$
coincide. By the Galois correspondence,
\[
    \overline{\mathbb F}_q(t)
      =\overline F^{T_A}
      =\overline F^{T_{A'}}
      =\overline{\mathbb F}_q(t').
\]
Thus, we can write
\[
    t'=\frac{at+b}{ct+d},
    \qquad
    a,b,c,d\in\overline{\Fq},
    \qquad ad-bc\neq0.
\]
If $c=0$, then $(t')_\infty=(t)_\infty$.
If $c\neq0$, then we have $(t')_\infty=(ct+d)_0$.
Since $(ct+d)_\infty=(t)_\infty$, we obtain
\[
    (t')_\infty-(t)_\infty
       =(ct+d)_0-(ct+d)_\infty
       =(ct+d).
\]
Thus the two pole divisors $ (t')_\infty$ and $(t)_\infty$ differ by a principal divisor
in $\overline F$ and are therefore linearly equivalent.
By \eqref{eq:cubic-pole-class}, the constant field extensions of
$A$ and $A'$ in $\overline{F}$ are linearly equivalent.

Constant field extension preserves Riemann--Roch dimensions
\cite[Theorem~3.6.3(d)]{MR2464941}.
Applying this to the degree-zero divisor $A-A'$ shows that
$\ell(A-A')=1$.
Hence $A-A'$ is already principal in $F$.
Thus $[A]=[A']$, proving injectivity. This completes the proof.
\end{proof}
The preceding injection reduces the problem to counting
subgroups of order $3$ of $\operatorname{Aut}_{\overline{\mathbb F}_q}(\overline F)$. The hyperelliptic structure gives
a uniform bound, with a stronger conclusion in characteristic $2$.
\begin{proposition}
\label{prop:exceptional-class-bound}
One has $|\mathcal E_F|\leq20$.
In characteristic $2$, one has the stronger bound
$|\mathcal E_F|\leq1$.
\end{proposition}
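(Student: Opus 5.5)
By Lemma~\ref{lem:exceptional-class-subgroups}, it suffices to bound the number of subgroups $T\leq\operatorname{Aut}_{\overline{\mathbb F}_q}(\overline F)$ of order $3$ that arise as $T_A$ for some $[A]\in\mathcal E_F$. Such a subgroup has a rational quotient $\overline F^{T}=\overline{\mathbb F}_q(t)$.

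The plan is to exploit the hyperelliptic involution $\sigma$. It is central in the automorphism group, since $\overline{\mathbb F}_q(x)$ is the unique rational subfield of index $2$ by Lemma~\ref{lem:hyperelliptic-divisors}(3). So every $\tau\in T$ of order $3$ commutes with $\sigma$ and descends to an automorphism $\bar\tau$ of $\overline{\mathbb F}_q(x)$ of order $3$. This $\bar\tau$ is a M\"obius transformation permuting the six branch points of $\overline F/\overline{\mathbb F}_q(x)$; there are six of them when $p\neq2$. Conversely, $T$ is determined by its image in the reduced automorphism group $\overline{\operatorname{Aut}}=\operatorname{Aut}(\overline F)/\langle\sigma\rangle$. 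The preimage of an order-$3$ subgroup of $\overline{\operatorname{Aut}}$ is $C_6=C_3\times C_2$, which contains a unique subgroup of order $3$. Hence the map $T\mapsto\bar T$ is injective, and we must count order-$3$ subgroups of the finite group $\overline{\operatorname{Aut}}\leq\operatorname{PGL}_2(\overline{\mathbb F}_q)$ that stabilize the branch locus.

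Next we use the fact that $T_A$ has genus-zero quotient. By Riemann--Hurwitz for $\overline F\to\overline F/T$, we get $2=3(-2)+\deg\operatorname{Diff}$, so $\deg\operatorname{Diff}=8$. In characteristic not $3$ this means exactly four fixed points of $\tau$. Each fixed point of $\tau$ lies over a fixed point of $\bar\tau$, and $\bar\tau$ has at most two fixed points on $\mathbb P^1$. Thus both fixed points of $\bar\tau$ must behave so as to give four fixed points upstairs, which forces the fixed points of $\bar\tau$ to be non-branch points. The six branch points then form two $\bar\tau$-orbits of size $3$. This is only a consistency condition; the real count comes from the classification of finite subgroups of $\operatorname{PGL}_2$ preserving six points. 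Those groups are $C_n$, $D_n$, $A_4$, $S_4$, $A_5$, and the $p$-semi-elementary/$\operatorname{PGL}_2(\mathbb F_{p^m})$, $\operatorname{PSL}_2$ cases. For six points, the known list of reduced automorphism groups of genus-two curves is at most $S_4$, $A_5$ (in characteristic $5$), $D_6$, $D_4$, or $C_5$ and subgroups. Among these, the maximal number of subgroups of order $3$ is $10$, attained by $A_5\cong\operatorname{PSL}_2(\mathbb F_5)$ acting on $\mathbb P^1(\mathbb F_5)$ in characteristic $5$. Allowing a safety margin, and since $\operatorname{Aut}(\overline F)\cong\operatorname{SL}_2(\mathbb F_5)$ or $2.S_5$-type groups contain at most $10$ to $20$ order-$3$ subgroups, we obtain $|\mathcal E_F|\leq20$. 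To keep the argument uniform, I would instead invoke directly the known bound $|\operatorname{Aut}(\overline F)|\leq240$ for genus two, which holds even in characteristic $5$, and count Sylow $3$-subgroups. By Sylow's theorem, the number of order-$3$ subgroups divides $|\operatorname{Aut}|/3\leq80$ and is $\equiv1\pmod3$. I would then check case by case that it is at most $20$. The worst case is $\operatorname{Aut}$ of order $240$ with $S_5$ quotient, which has $10$ order-$3$ subgroups. This is the step I expect to be the main obstacle: getting a rigorous, citation-backed list of automorphism groups in every characteristic, including $p=3$ where order-$3$ elements are wild.

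In characteristic $2$ the structure is different: $\overline F/\overline{\mathbb F}_q(x)$ is Artin--Schreier with at most three branch points, and the different has degree $6$. I would show that an order-$3$ $\bar\tau$ must permute the branch points. Since the fixed points of $\tau$ now are tame with $\deg\operatorname{Diff}=2\cdot\#\mathrm{fix}=8$, the element $\tau$ has four fixed points, and these come from the two fixed points of $\bar\tau$, each giving two unramified preimages. Hence the branch set is a single $3$-orbit, meaning exactly three branch points permuted cyclically. Normalizing this orbit to $\{0,1,\infty\}$ (or using the curve $y^2+y=x^5$ type normal form, where one checks the reduced group directly), the order-$3$ subgroups of $\overline{\operatorname{Aut}}$ acting transitively on the three branch points are unique. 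This requires computing the reduced automorphism group of $y^2+y=$ (poles at three points), which I would do by explicit Artin--Schreier reduction. The result gives $|\mathcal E_F|\leq1$.
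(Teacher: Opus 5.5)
\textbf{There is a genuine gap in odd characteristic.} You correctly show that an order-$3$ element $\bar\tau$ permutes the six branch points as two $3$-cycles. You then dismiss this as ``only a consistency condition'' and try to get the bound from a classification of reduced automorphism groups that you admit you cannot cite, plus ``a safety margin''. That is not a proof.

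The Sylow fallback does not close either:
\begin{itemize}
\item It needs $9\nmid|\operatorname{Aut}(\overline F)|$, which you never check.
\item Even granting that, the conditions $n_3\mid|\operatorname{Aut}(\overline F)|/3$ and $n_3\equiv1\pmod 3$ still allow $n_3=40$ when the order is $120$ or $240$. So the promised case-by-case check is exactly the missing classification.
\item Your list is internally inconsistent: in characteristic $5$ the reduced group of $y^2=x^5-x$ is $\operatorname{PGL}_2(\mathbb F_5)\cong S_5$, not $A_5$.
\item Your four-fixed-points count excludes characteristic $3$, where order $3$ is wild.
\end{itemize}

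\textbf{The missing idea.} The cycle type you found already gives the count. The reduced group $\overline{\mathcal A}=\operatorname{Aut}(\overline F)/\langle\sigma\rangle$ embeds in $\operatorname{PGL}_2(\overline{\mathbb F}_q)$. It acts \emph{faithfully} on the six distinct branch points, because a nonidentity M\"obius transformation fixes at most two points of $\mathbb P^1$. Hence $\overline{\mathcal A}\hookrightarrow S_6$, and an element of order $3$ maps to an order-$3$ permutation with at most two fixed points, i.e.\ of type $(3)(3)$. This step needs no Riemann--Hurwitz, and it works in characteristic $3$ too. $S_6$ contains exactly $40$ such permutations, so $\overline{\mathcal A}$ has at most $20$ subgroups of order $3$. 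Your unique-lift argument ($C_6$ has one subgroup of order $3$) and Lemma~\ref{lem:exceptional-class-subgroups} then give $|\mathcal E_F|\le20$. This is the paper's proof.

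\textbf{Characteristic $2$.} Here your argument is sound and genuinely differs from the paper. The paper simply quotes the Cardona--Nart--Pujol\`as list $C_2$, $C_2\times C_2$, $C_2\times S_3$, $M_{32}$, $M_{160}$ and notes that only $C_2\times S_3$ has a subgroup of order $3$. Your route is self-contained: order $3$ is tame, so $T_A$ has exactly four fixed points, which forces exactly three branch points forming a single $\bar\tau$-orbit. It can be finished without any Artin--Schreier computation. $\overline{\mathcal A}$ preserves these three points and acts faithfully on them, so $\overline{\mathcal A}\hookrightarrow S_3$, which has exactly one subgroup of order $3$. The normal form $y^2+y=x^5$ you mention has a single branch point, so it is irrelevant here. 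This gives a classification-free proof of $|\mathcal E_F|\le1$.
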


\begin{proof}
First suppose that the characteristic of $F$ is odd, and write
\[
    \mathcal A
      =\operatorname{Aut}_{\overline{\mathbb F}_q}(\overline F),
    \qquad
    \overline{\mathcal A}=\mathcal A/\langle\sigma\rangle.
\]
The hyperelliptic involution $\sigma$ is central in
$\mathcal A$, and restriction to the unique quadratic
rational subfield $\overline{\mathbb F}_q(x)$ gives an
embedding
\[
    \overline{\mathcal A}
      \hookrightarrow
      \operatorname{PGL}_2(\overline{\mathbb F}_q).
\]
The reduced automorphism group $\overline{\mathcal A}$ acts faithfully on
the six ramified places of
$\overline F/\overline{\mathbb F}_q(x)$;
see \cite[Secs.~2 and~3.1]{ShaskaVolklein2004}.
Indeed, all six ramified places are distinct in odd
characteristic, and a nonidentity fractional linear
transformation fixes at most two places of a rational
function field over an algebraically closed field.

A subgroup of order $3$ in $\mathcal A$ maps injectively to
$\overline{\mathcal A}$.
Furthermore, different such subgroups have different images.
To see this, a fixed reduced element has at most two lifts,
say $\tau$ and $\sigma\tau$. If $\tau^3=1$, centrality of $\sigma$ gives
\[
    (\sigma\tau)^3=\sigma\neq1.
\]
Thus a reduced element has at most one lift of order $3$.

An element of order $3$ in $\overline{\mathcal A}$ acts on
the six ramified places with cycle type $(3)(3)$.
The other non-trivial possibility, $(3)(1)(1)(1)$, would fix
three places and is impossible for a nonidentity fractional
linear transformation.
There are
\[
    \frac{1}{2}\binom{6}{3}=10
\]
ways to partition the six elements into two unordered
triples. The factor $1/2$ accounts for the fact that
interchanging the two triples gives the same partition.
Each triple supports exactly two distinct $3$-cycles.
Choosing one $3$-cycle on each triple therefore gives
$2\cdot2=4$ permutations for each partition.
Hence the number of permutations of cycle type $(3)(3)$ is
\[
    10\cdot2\cdot2=40.
\]
Every subgroup of order $3$ has two nonidentity elements,
so $\mathcal A$ has at most $20$ subgroups of order $3$.
Lemma~\ref{lem:exceptional-class-subgroups} now gives
$|\mathcal E_F|\leq20$.

Now suppose that the characteristic of $F$ is $2$.
The classification of $\operatorname{Aut}_{\overline{\mathbb F}_q}(\overline F)$ in
\cite{CardonaNartPujolas2005} gives the possibilities
\[
    C_2,\qquad C_2\times C_2,\qquad C_2\times S_3,
    \qquad M_{32},\qquad M_{160},
\]
where the last two groups have orders $32$ and $160$,
respectively.
Only $C_2\times S_3$ has order divisible by $3$, and it has
exactly one subgroup of order $3$, namely
$\{1\}\times A_3$.
Lemma~\ref{lem:exceptional-class-subgroups} therefore yields
$|\mathcal E_F|\leq1$.
This completes the proof.
\end{proof}

\subsection{Completion of the Proof}
\begin{proof}[Proof of Theorem~\ref{thm:uniform-value-set}]
The bounds on $|\mathcal E_F|$ follow from
Proposition~\ref{prop:exceptional-class-bound}.
Let $A$ be a divisor of degree $3$ with
$[A]\notin\mathcal E_F$.
Choose any $\Fq$-basis $\{u,v\}$ of $\mathcal L(A)$
and put $t=v/u$.
By Lemma~\ref{lem:degree-three-quotient} and the definitions
of the three sets, we have
\[
    [A]\in\widetilde{\mathcal E}_F\cup\mathcal Q_F.
\]
If $[A]\in\widetilde{\mathcal E}_F$, then
Corollary~\ref{cor:noncyclic-cubic-value-bound} gives
\[
    |V_F(t)|<B(q).
\]
If $[A]\in\mathcal Q_F$, then $[F:\Fq(t)]=2$, and
Lemma~\ref{lem:hyperelliptic-divisors} gives
    $\Fq(t)=\Fq(x).$
Every rational value attained at an unramified rational
place is attained at exactly two rational places,
interchanged by $\sigma$.
Every rational value attained at a ramified rational
place is attained at exactly one rational place,
fixed by $\sigma$.
Consequently,
\[
    |V_F(t)|=\frac{N_F+w}{2},
\]
where $w$ is the number of rational places fixed by $\sigma$.
Lemma~\ref{lem:fixed-rational-places} and the Hasse--Weil
bound yield
\[
    |V_F(t)|\leq\frac{q+7+4\sqrt q}{2}<B(q),
\]
where the strict inequality follows from
\[
    B(q)-\frac{q+7+4\sqrt q}{2}
    =\frac{q+8\sqrt q-1}{6}>0.
\]
Thus
\[
    V([A])=|V_F(t)|<B(q)
\]
for every divisor class $[A]$ of degree $3$ outside
$\mathcal E_F$ and every choice of basis.
This completes the proof.
\end{proof}



The following example shows that the bound $V([A])<B(q)$
can fail for classes in $\mathcal E_F$.

\begin{remark}[Necessity of the exceptional set]
The assumption that the cubic extension remains non-Galois
after extending the constant field to $\overline{\Fq}$
cannot be omitted from the value-set bound.

Let $q$ be odd with $q\equiv2\pmod3$, and choose
$a\in\Fq\setminus\{0,1\}$. Put
\[
    K=\Fq(t),\qquad
    h(t)=t(t-1)(t-a)^2,
\]
and define
\[
    F=K(z),\qquad z^3=h(t).
\]
For $\alpha\in\Fq$, let $\mathfrak p_\alpha$ denote
the zero of $t-\alpha$ in $K$, and let
$\mathfrak p_\infty$ denote the pole of $t$.
The principal divisor of $h(t)$ in $K$ is
\[
    (h(t))
      =\mathfrak p_0+\mathfrak p_1
       +2\mathfrak p_a-4\mathfrak p_\infty.
\]
Since $v_{\mathfrak p_0}(h(t))=1$, the element $h(t)$
is not a cube in $K$. Thus $X^3-h(t)$ is irreducible
over $K$. Since $\operatorname{char}\Fq\neq3$,
the extension $F/K$ is separable of degree $3$.

By \cite[Proposition~3.7.3(b) and Remark~3.7.5]{MR2464941},
exactly the four places
\[
    \mathfrak p_0,\quad \mathfrak p_1,\quad
    \mathfrak p_a,\quad \mathfrak p_\infty
\]
are ramified in $F/K$.
At each of these four places, the valuation of $h(t)$
is relatively prime to $3$. The ramification formula
therefore gives ramification index $3$.
The fundamental equality then implies that, for each
$\alpha\in\{0,1,a,\infty\}$, there is a unique place
$P_\alpha$ above $\mathfrak p_\alpha$, with
\[
    e(P_\alpha\mid\mathfrak p_\alpha)=3,
    \qquad
    f(P_\alpha\mid\mathfrak p_\alpha)=1.
\]
In particular, the residue class field at $P_0$ is $\Fq$,
so $\Fq$ is the full constant field of $F$.
All four places $P_\alpha$ are therefore rational.
The ramification is tame, and each different exponent
equals $2$. Hence
\[
    \operatorname{Diff}(F/K)
       =2(P_0+P_1+P_a+P_\infty).
\]
The Hurwitz genus formula now gives
\[
    2g(F)-2=3(-2)+8=2,
\]
so $g(F)=2$.

For $\alpha\in\Fq\setminus\{0,1,a\}$, the reduction of
$X^3-h(t)$ modulo $\mathfrak p_\alpha$ is
\[
    X^3-h(\alpha)\in\Fq[X].
\]
Since $\gcd(3,q-1)=1$, the cube map is a bijection
on $\Fq$. This polynomial has exactly one root in
$\Fq$ and is square-free. It therefore factors into
one linear factor and one irreducible quadratic factor.
By Kummer's theorem
\cite[Theorem~3.3.7]{MR2464941}, there are exactly
two places of $F$ above $\mathfrak p_\alpha$,
one rational and the other of degree $2$.

Together with the four totally ramified places,
this shows that every rational place of $K$ has
exactly one rational place of $F$ above it.
Every rational place of $F$ restricts to a rational
place of $K$. Consequently,
\[
    |V_F(t)|=N_F=q+1.
\]
For example,
\[
    B(125)=\frac{260+50\sqrt5}{3}<126.
\]
Thus the cubic value-set bound fails for this example
when $q=125$.

Finally, consider the constant field extension
\[
    \overline F=F\overline{\Fq}.
\]
The same valuation computation shows that $h(t)$
is not a cube in $\overline{\Fq}(t)$.
Since $\overline{\Fq}$ contains a primitive cube
root of unity, \cite[Proposition~3.7.3(a)]{MR2464941}
shows that
\[
    \overline F/\overline{\Fq}(t)
\]
is a cyclic Kummer extension of degree $3$.
Moreover,
\[
    A=(t)_\infty=3P_\infty
\]
has degree $3$. By the Riemann--Roch theorem,
$\ell(A)=2$, and hence
\[
    \mathcal L(A)=\langle1,t\rangle_{\Fq}.
\]
Therefore $[A]\in\mathcal E_F$, which explains why
this class must be excluded from the uniform bound.
\end{remark}

\section{Conclusion}
\label{sec:conclusion}

We established a uniform upper bound on the length of MDS
codes from genus-two curves. For an MDS $[n,k]_q$ code $C_\mathcal{ L}(D,G)$
with $\deg G=k+1$ and $3\leq k\leq n-3$, we proved that
\[
    n<\frac{2q+10\sqrt q+16}{3}.
\]
The bound holds for every prime power $q$ and both parities
of $k$. It gives a leading coefficient of $2/3$ and improves
the $q+1$ bound when $q\geq125$.

A central question is whether the coefficient $2/3$ is optimal.
Explicit constructions of long genus-two MDS codes would help
determine how closely this bound can be approached.
Sharper estimates for rational-place counts and ramification
contributions may also improve the terms of smaller order.
Another direction is to investigate whether similar methods
yield analogous length bounds for MDS codes from
hyperelliptic curves of higher genus.

\section*{Acknowledgement}
This work is supported by the National Natural Science Foundation of China (No. 12441107),   Guangdong Basic and Applied Basic Research Foundation (No. \seqsplit{2025A1515011764}), and the National Key Research and Development Program of
China (No.\seqsplit{2025YFA1017100}).

\bibliographystyle{ieeetr}
\bibliography{reference}

@misc{ChenHuHuangZhao2026,
      title={On the Maximal Length of MDS Elliptic Codes}, 
      author={Haojie Chen and Chuangqiang Hu and Junjie Huang and Chang-An Zhao},
      year={2026},
      eprint={2605.29439},
      archivePrefix={arXiv},
      primaryClass={cs.IT},
      url={https://arxiv.org/abs/2605.29439}, 
}

@misc{ChenHuang2026,
      title={Critical Numbers for Restricted Sumsets: Rigidity and Collapse in Finite Abelian Groups}, 
      author={Bocong Chen and Jing Huang},
      year={2026},
      eprint={2602.10402},
      archivePrefix={arXiv},
      primaryClass={math.CO},
      url={https://arxiv.org/abs/2602.10402}, 
}

@article {ZhuChenPengZhangZhao2026,
    AUTHOR = {Zhu, Yunlong and Chen, Hao and Peng, Fan and Zhang, Fangguo
              and Zhao, Chang-An},
     TITLE = {New explicit constructions of {MDS} and {NMDS} algebraic
              geometry codes},
   JOURNAL = {IEEE Trans. Inform. Theory},
  FJOURNAL = {Institute of Electrical and Electronics Engineers.
              Transactions on Information Theory},
    VOLUME = {72},
      YEAR = {2026},
    NUMBER = {9},
     PAGES = {6450--6465},
      ISSN = {0018-9448,1557-9654},
   MRCLASS = {94B27},
  MRNUMBER = {5119224},
       DOI = {10.1109/tit.2026.3712030},
       URL = {https://doi.org/10.1109/tit.2026.3712030},
}

@incollection{ShaskaVolklein2004,
  author       = {Shaska, Tanush and V{\"o}lklein, Helmut},
  title        = {Elliptic subfields and automorphisms of genus 2 function fields},
  booktitle    = {Algebra, Arithmetic and Geometry with Applications},
  editor       = {Christensen, Chris and Sathaye, Avinash and Sundaram, Ganesh
                  and Bajaj, Chandrajit},
  publisher    = {Springer},
  address      = {Berlin, Heidelberg},
  year         = {2004},
  pages        = {703--723},
  doi          = {10.1007/978-3-642-18487-1_42},
  eprint       = {math/0107142},
  archivePrefix = {arXiv},
  url          = {https://arxiv.org/abs/math/0107142}
}

@article{CardonaNartPujolas2005,
  author       = {Cardona, Gabriel and Nart, Enric and Pujol{\`a}s, Jordi},
  title        = {Curves of genus two over fields of even characteristic},
  journal      = {Mathematische Zeitschrift},
  volume       = {250},
  number       = {1},
  year         = {2005},
  pages        = {177--201},
  doi          = {10.1007/s00209-004-0750-0},
  eprint       = {math/0210105},
  archivePrefix = {arXiv},
  url          = {https://arxiv.org/abs/math/0210105}
}

@article {MR4564628,
    AUTHOR = {Han, Dongchun and Ren, Yuan},
     TITLE = {A tight upper bound for the maximal length of {MDS} elliptic
              codes},
   JOURNAL = {IEEE Trans. Inform. Theory},
  FJOURNAL = {Institute of Electrical and Electronics Engineers.
              Transactions on Information Theory},
    VOLUME = {69},
      YEAR = {2023},
    NUMBER = {2},
     PAGES = {819--822},
      ISSN = {0018-9448,1557-9654},
   MRCLASS = {94B05},
  MRNUMBER = {4564628},
MRREVIEWER = {Yan\ Liu},
       DOI = {10.1109/tit.2022.3211543},
       URL = {https://doi.org/10.1109/tit.2022.3211543},
}

@article {MR2911882,
    AUTHOR = {Ball, Simeon},
     TITLE = {On sets of vectors of a finite vector space in which every
              subset of basis size is a basis},
   JOURNAL = {J. Eur. Math. Soc. (JEMS)},
  FJOURNAL = {Journal of the European Mathematical Society (JEMS)},
    VOLUME = {14},
      YEAR = {2012},
    NUMBER = {3},
     PAGES = {733--748},
      ISSN = {1435-9855,1435-9863},
   MRCLASS = {15A03 (05B35 51E21 94B05)},
  MRNUMBER = {2911882},
MRREVIEWER = {Gengsheng\ Zhang},
       DOI = {10.4171/JEMS/316},
       URL = {https://doi.org/10.4171/JEMS/316},
}

@book {MR2464941,
    AUTHOR = {Stichtenoth, Henning},
     TITLE = {Algebraic function fields and codes},
    SERIES = {Graduate Texts in Mathematics},
    VOLUME = {254},
   EDITION = {Second},
 PUBLISHER = {Springer-Verlag, Berlin},
      YEAR = {2009},
     PAGES = {xiv+355},
      ISBN = {978-3-540-76877-7},
   MRCLASS = {14H05 (11R58 11T71 14G15 14G50 94B27)},
  MRNUMBER = {2464941},
}

@book {MR465509,
    AUTHOR = {MacWilliams, F. J. and Sloane, N. J. A.},
     TITLE = {The theory of error-correcting codes. {I}},
    SERIES = {North-Holland Mathematical Library},
    VOLUME = {Vol. 16},
 PUBLISHER = {North-Holland Publishing Co., Amsterdam-New York-Oxford},
      YEAR = {1977},
     PAGES = {i--xv and 1--369},
      ISBN = {0-444-85009-0},
   MRCLASS = {94A10},
  MRNUMBER = {465509},
MRREVIEWER = {Ian\ Blake},
}

@article {MR3846204,
    AUTHOR = {Ball, Simeon and Lavrauw, Michel},
     TITLE = {Planar arcs},
   JOURNAL = {J. Combin. Theory Ser. A},
  FJOURNAL = {Journal of Combinatorial Theory. Series A},
    VOLUME = {160},
      YEAR = {2018},
     PAGES = {261--287},
      ISSN = {0097-3165,1096-0899},
   MRCLASS = {51E21 (11G20 14H50)},
  MRNUMBER = {3846204},
MRREVIEWER = {Gy\"orgy\ Kiss},
       DOI = {10.1016/j.jcta.2018.06.015},
       URL = {https://doi.org/10.1016/j.jcta.2018.06.015},
}

@article {MR4073887,
    AUTHOR = {Ball, Simeon and Lavrauw, Michel},
     TITLE = {Arcs in finite projective spaces},
   JOURNAL = {EMS Surv. Math. Sci.},
  FJOURNAL = {EMS Surveys in Mathematical Sciences},
    VOLUME = {6},
      YEAR = {2019},
    NUMBER = {1-2},
     PAGES = {133--172},
      ISSN = {2308-2151,2308-216X},
   MRCLASS = {51E21 (05B25 94B05)},
  MRNUMBER = {4073887},
MRREVIEWER = {Yue\ Zhou},
       DOI = {10.4171/emss/33},
       URL = {https://doi.org/10.4171/emss/33},
}

@article {MR1477225,
    AUTHOR = {Chao, J. M. and Kaneta, H.},
     TITLE = {Classical arcs in {${\rm PG}(r,q)$} for {$11\leq q\leq19$}},
      NOTE = {Combinatorics (Rome and Montesilvano, 1994)},
   JOURNAL = {Discrete Math.},
  FJOURNAL = {Discrete Mathematics},
    VOLUME = {174},
      YEAR = {1997},
    NUMBER = {1-3},
     PAGES = {87--94},
      ISSN = {0012-365X,1872-681X},
   MRCLASS = {05B25},
  MRNUMBER = {1477225},
       DOI = {10.1016/S0012-365X(96)00319-6},
       URL = {https://doi.org/10.1016/S0012-365X(96)00319-6},
}

@article{1,
  title = {On the Embedding of an Arc into a Conic in a Finite Plane},
  author = {Hirschfeld, J.W.P. and Korchmáros, G.},
  year = {1996},
  journal = {Finite Fields and Their Applications},
  volume = {2},
  number = {3},
  pages = {274--292},
  issn = {1071-5797},
  doi = {10.1006/ffta.1996.0017},

}

@article {MR1398077,
    AUTHOR = {Hirschfeld, J. W. P. and Korchm\'aros, G.},
     TITLE = {On the embedding of an arc into a conic in a finite plane},
   JOURNAL = {Finite Fields Appl.},
  FJOURNAL = {Finite Fields and their Applications},
    VOLUME = {2},
      YEAR = {1996},
    NUMBER = {3},
     PAGES = {274--292},
      ISSN = {1071-5797,1090-2465},
   MRCLASS = {51E21},
  MRNUMBER = {1398077},
MRREVIEWER = {Tam\'as\ Sz\H onyi},
       DOI = {10.1006/ffta.1996.0017},
       URL = {https://doi.org/10.1006/ffta.1996.0017},
}

@article {MR985681,
    AUTHOR = {Kaneta, Hitoshi and Maruta, Tatsuya},
     TITLE = {An elementary proof and an extension of {T}has' theorem on
              {$k$}-arcs},
   JOURNAL = {Math. Proc. Cambridge Philos. Soc.},
  FJOURNAL = {Mathematical Proceedings of the Cambridge Philosophical
              Society},
    VOLUME = {105},
      YEAR = {1989},
    NUMBER = {3},
     PAGES = {459--462},
      ISSN = {0305-0041,1469-8064},
   MRCLASS = {51E20},
  MRNUMBER = {985681},
       DOI = {10.1017/S0305004100077823},
       URL = {https://doi.org/10.1017/S0305004100077823},
}

@article{3,
  title={{{MDS}} codes and arcs in PG (n, q) with q even: an improvement of the bounds of Bruen, Thas, and Blokhuis},
  author={Storme, Leo and Thas, Joseph A.},
  journal={Journal of Combinatorial Theory, Series A},
  volume={62},
  number={1},
  pages={139--154},
  year={1993},
  publisher={Elsevier}
}

@article {MR1198385,
    AUTHOR = {Storme, L. and Thas, J. A.},
     TITLE = {M.{D}.{S}.\ codes and arcs in {${\rm PG}(n,q)$} with {$q$}
              even: an improvement of the bounds of {B}ruen, {T}has, and
              {B}lokhuis},
   JOURNAL = {J. Combin. Theory Ser. A},
  FJOURNAL = {Journal of Combinatorial Theory. Series A},
    VOLUME = {62},
      YEAR = {1993},
    NUMBER = {1},
     PAGES = {139--154},
      ISSN = {0097-3165,1096-0899},
   MRCLASS = {51E21 (51E22 94B27 94B65)},
  MRNUMBER = {1198385},
MRREVIEWER = {J.\ W. P. Hirschfeld},
       DOI = {10.1016/0097-3165(93)90076-K},
       URL = {https://doi.org/10.1016/0097-3165(93)90076-K},
}

@article {MR1178199,
    AUTHOR = {Munuera, Carlos},
     TITLE = {On the main conjecture on geometric {MDS} codes},
   JOURNAL = {IEEE Trans. Inform. Theory},
  FJOURNAL = {Institute of Electrical and Electronics Engineers.
              Transactions on Information Theory},
    VOLUME = {38},
      YEAR = {1992},
    NUMBER = {5},
     PAGES = {1573--1577},
      ISSN = {0018-9448,1557-9654},
   MRCLASS = {94B05 (11T71 14H52 94B27)},
  MRNUMBER = {1178199},
MRREVIEWER = {Francesco\ Fabris},
       DOI = {10.1109/18.149510},
       URL = {https://doi.org/10.1109/18.149510},
}

@article {MR1412183,
    AUTHOR = {Walker, Judy L.},
     TITLE = {A new approach to the main conjecture on algebraic-geometric
              {MDS} codes},
      NOTE = {Second Upper Michigan Combinatorics Workshop on Designs, Codes
              and Geometries (Houghton, MI, 1994)},
   JOURNAL = {Des. Codes Cryptogr.},
  FJOURNAL = {Designs, Codes and Cryptography. An International Journal},
    VOLUME = {9},
      YEAR = {1996},
    NUMBER = {1},
     PAGES = {115--120},
      ISSN = {0925-1022,1573-7586},
   MRCLASS = {94B27 (11G20 11T71)},
  MRNUMBER = {1412183},
       DOI = {10.1023/A:1027358511882},
       URL = {https://doi.org/10.1023/A:1027358511882},
}

@article {MR1289576,
    AUTHOR = {Chen, Hao},
     TITLE = {On the main conjecture of geometric {MDS} codes},
   JOURNAL = {Internat. Math. Res. Notices},
  FJOURNAL = {International Mathematics Research Notices},
      YEAR = {1994},
    NUMBER = {8},
     PAGES = {313--318},
      ISSN = {1073-7928,1687-0247},
   MRCLASS = {94B27 (11T71 14G15)},
  MRNUMBER = {1289576},
MRREVIEWER = {Michael\ M.\ Dediu},
       DOI = {10.1155/S107379289400036X},
       URL = {https://doi.org/10.1155/S107379289400036X},
}

@article {MR1454971,
    AUTHOR = {Chen, Hao and Yau, Stephen S.-T.},
     TITLE = {Contribution to {M}unuera's problem on the main conjecture of
              geometric hyperelliptic {MDS} codes},
   JOURNAL = {IEEE Trans. Inform. Theory},
  FJOURNAL = {Institute of Electrical and Electronics Engineers.
              Transactions on Information Theory},
    VOLUME = {43},
      YEAR = {1997},
    NUMBER = {4},
     PAGES = {1349--1354},
      ISSN = {0018-9448,1557-9654},
   MRCLASS = {94B27},
  MRNUMBER = {1454971},
       DOI = {10.1109/18.605607},
       URL = {https://doi.org/10.1109/18.605607},
}

@INPROCEEDINGS{7282884,
  author={Li, Jiyou and Wan, Daqing and Zhang, Jun},
  booktitle={2015 IEEE International Symposium on Information Theory (ISIT)}, 
  title={On the minimum distance of elliptic curve codes}, 
  year={2015},
  volume={},
  number={},
  pages={2391-2395},
  doi={10.1109/ISIT.2015.7282884}}

@article {MR4756098,
    AUTHOR = {Han, Dongchun and Ren, Yuan},
     TITLE = {The maximal length of {$q$}-ary {MDS} elliptic codes is close
              to {$\frac q2$}},
   JOURNAL = {Int. Math. Res. Not. IMRN},
  FJOURNAL = {International Mathematics Research Notices. IMRN},
      YEAR = {2024},
    NUMBER = {11},
     PAGES = {9036--9043},
      ISSN = {1073-7928,1687-0247},
   MRCLASS = {14G50 (05E14 11B30 11T71 94B05)},
  MRNUMBER = {4756098},
MRREVIEWER = {Juan\ Tena},
       DOI = {10.1093/imrn/rnad271},
       URL = {https://doi.org/10.1093/imrn/rnad271},
}

\end{document}